\newcommand{\eq}[1]{\begin{align}\begin{aligned}#1\end{aligned}\end{align}}
\newcommand{\pnorm}[2]{\left\|#2\right\|_{#1}}
\newtheorem{theorem}{Theorem}[section]
\newtheorem{definition}[theorem]{Definition}
\newtheorem{proposition}[theorem]{Proposition}
\newtheorem{corollary}[theorem]{Corollary}
\newtheorem{lemma}[theorem]{Lemma}
\newtheorem{remark}[theorem]{Remark}
\newtheorem{assumption}[theorem]{Assumption}
\DeclareMathOperator*{\E}{\mathbb E} 
\DeclareMathOperator{\trace}{\mathrm{Tr}} 
\DeclareMathOperator{\wg}{Wg} 
\numberwithin{equation}{section}
\begin{document}

\title{Concentration of quantum channels with random Kraus operators via matrix Bernstein inequality}

\author{Motohisa Fukuda}

\begin{abstract}
In this study, we generate quantum channels with random Kraus operators to typically obtain almost twirling quantum channels and quantum expanders. To prove the concentration phenomena, we use matrix Bernstein's inequality. In this way, our random models do not utilize Haar-distributed unitary matrices or Gaussian matrices. Rather, as in the preceding research, we use unitary $t$-designs to generate mixed tensor-product unitary channels acting on $(\mathbb C^{d})^{\otimes t}$. 
Although our bounds in Schatten $p$-norm are valid only for $1\leq p \leq 2$, we show that they are typically quantum $\epsilon$-twirling channels with the tail bound proportional to $1/\mathrm{poly}(d^t)$, while such bounds were previously constants. The number of required Kraus operators was also improved by powers of $\log d$ and $t$, to be proportional to $td^t\log d / \epsilon^2$.
Such random quantum channels are also typically quantum expanders, but the number of Kraus operators must grow proportionally to $t \log d$ in our case. 
Finally, a new non-unital model of super-operators generated by bounded and isotropic random Kraus operators was introduced, which can be typically rectified to yield almost randomizing quantum channels and quantum expanders.
\end{abstract}

\maketitle
\markright{\MakeUppercase{Concentration of quantum channels}}


\section{Introduction}
Concentration inequalities bound the tail probabilities of random variables. Among such statements, the Markov inequality and the Chernoff bound were extended for random matrices in \cite{ahlswede2002strong} to solve quantum information problems, where moment-generating functions of independent random matrices were processed by the Golden-Thompson inequality. In addition, Bernstein's inequality, which asserts that the sum of independent random variables concentrates around the mean, was also generalized for matrices in \cite{oliveira2009concentration} and \cite{tropp2012user}. In particular in \cite{tropp2012user} Golden-Thompson inequality was replaced by Lieb's concavity theorem of trace-exponential map. The matrix versions of Bernstein's inequality yielded various results in other fields, for example, on the topic of matrix completion \cite{recht2011simpler}. 

Almost randomizing channels have been investigated for nearly perfect security with rather shorter shared random keys \cite{hayden2004randomizing}. Let 
\eq{\label{eq:mixed unitary}
\Phi(\rho) = \frac{1}{k}\sum_{i=1}^k U_i \rho U_i^* 
}
be a mixed unitary quantum channel, where $U_i$'s are $d \times d$ unitary matrices. In \cite{hayden2004randomizing}, $\Phi$ is defined to be an $\epsilon$-randomizing channel if
\eq{\label{eq:e-randamizing}
\|\Phi(\rho) - I/d \|_p \leq \epsilon d^{1/p-1}
}
with $\epsilon>0$ and $1\leq p \leq \infty$ for all quantum states $\rho$. 
When $U_i$'s are randomly chosen with respect to the Haar probability measure, $\Phi$ was proven to be $\epsilon$-randomizing with high probability if $k \geq C d \log d / \epsilon^2$ in \cite{hayden2004randomizing}, and later the bound for $k$ was improved to $k \geq C d / \epsilon^2$ in \cite{aubrun2009almost}. Here and below, $C>0$ is a universal constant. See \cite{hastings2009classical} for a similar result. 
In contrast, when the unitary matrices $U_i$'s are chosen with respect to an isotropic measure, which is within the scope of our paper, $\Phi$ is $\epsilon$-randomizing for $p=1$ with high probability if $k \geq d \log d /\epsilon^2$ for large enough $d$ in \cite{hayden2004randomizing}, and for $1 \leq p \leq \infty$ with probability more than one-half if $k \geq C d \log^6 d /\epsilon^2$ in \cite{aubrun2009almost}. Our results in particular show that $\Phi$ is $\epsilon$-randomizing for $1\leq p \leq 2$ with probability more than $1-1/\mathrm{poly}(d)$ if $k \geq C d \log d /\epsilon^2$; see Corollary \ref{corollary:t-twirling}.

This isotropic unitary setting, which can be thought of as unitary $1$-design, was generalized to the case of unitary $t$-design in \cite{lancien2020weak}, which is related to private broadcasting \cite{broadbent2022quantum}.
More precisely, replacing $U_i$ with $U_i^{\otimes t}$ in \eqref{eq:mixed unitary} we have
\eq{\label{eq:tensor-channel}
\Phi(\rho) = \frac{1}{k}\sum_{i=1}^k U_i^{\otimes t} \rho (U_i^*)^{\otimes t} \ .
}
Then, we call $\epsilon$-twirling the following property: for all quantum states $\rho$
\eq{\label{eq:e-twirling}
\pnorm{p}{\Phi(\rho) - \E_{U \in \mathcal U(d)}\left[U^{\otimes t} \rho (U^*)^{\otimes t}\right]} \leq \epsilon d^{t(1/p-1)}  \ .
}
Here, the map $\rho \mapsto \E_{U \in \mathcal U(d)}[U^{\otimes t} \rho (U^*)^{\otimes t}]$ is called a twirling quantum channel and the expectation is taken over $U$ distributed according to the Haar measure on $\mathcal U(d)$ the matrix group of $d \times d$ unitary matrices. See \eqref{eq:unitary channel}, \eqref{eq:twirling the average} and Definition \ref{definition:almost} for further details on those concepts. 
In \cite{lancien2020weak}, it was proved that the random channel in \eqref{eq:tensor-channel} is $\epsilon$-twirling with probability more than one-half for $1 \leq p \leq \infty$ if $k \geq C (td)^t (t \log d)^6 / \epsilon^2$.
In Corollary \ref{corollary:t-twirling} of this paper, on the other hand, restricting $p$ to $1 \leq p \leq 2$, we show that the $\epsilon$-twirling property holds with probability greater than $1-2 d^{t \left( 2- \frac{C}{6}\right)}$ if $k \geq Ctd^t \log d /\epsilon^2$, approximating a wider variety of twirling quantum channels. The proof methods in \cite{aubrun2009almost} and \cite{lancien2020weak} crucially include Dudley's inequality while ours Bernstein inequality, which confines us in the range $1 \leq p \leq 2$.

Graph expanders \cite{hoory2006expander}, viewed as linear maps on probability distribution, were generalized to define quantum expanders in \cite{ben2008quantum} and \cite{hastings2007random}. Indeed, the random mixed unitary channel defined in \eqref{eq:mixed unitary} was proved to be typically a quantum expander if $U_i$'s are chosen according to the Haar probability measure \cite{hastings2007random}; see also \cite{pisier2014quantum}. Moreover, in \cite{hastings2009classical}, a similar result turns out to hold for the tensor product mixed unitary channel in \eqref{eq:tensor-channel}. 
In contrast, we show, in Corollary \ref{cororally:expander}, that the random channel in \eqref{eq:tensor-channel} is a quantum expander with probability more than $1 - 2d^{t(2 - \frac{C}{6})}$ as long as $ k \geq C t\log d/\epsilon^2$ even if $U_i$'s are drawn from the unitary $t$-design. 
Recently in \cite{lancien2024optimal} when $U_i$'s are random with respect to the unitary $2t$-design and $k \geq (\log d)^{8 + \Delta}$, the random channel in \eqref{eq:tensor-channel} was proved to be a quantum expander with probability more than $1-d^{-1}$. However, note that the regimes of \cite{hastings2007random}, \cite{hastings2009classical} and \cite{lancien2024optimal} are different from ours. See the last half of Section \ref{sec:tensor unitary channels} and \cite{lancien2024optimal} for further details.

Finally, we will introduce completely positive super-operators defined by bounded isotropic random Kraus operators, which may not be unital.
As in Section \ref{sec:new models}, typically, they can be rectified to yield 
almost randomizing channels and quantum expanders. 
Of course, non-unitary models were already investigated; quantum expanders were constructed from random isometries induced by the Haar probability measure in \cite{gonzalez2018spectral} and the spectral gap, which constitutes the conditions for quantum expanders, was investigated with random Kraus operators of Gaussian matrices \cite{lancien2022correlation}. More generalized random models without sub-Gaussian distributions can be found in \cite{lancien2023note}. However, our new random model needs only simple conditions just enough to use Bernstein's inequality; see Assumption \ref{assumption:qe} and \eqref{eq:new model}.

The organization of the paper is as follows.
Section \ref{sec:basics} introduces basic concepts. Sections \ref{sec:mats} and \ref{sec:CP} develop the foundational notions of matrices, super-operators, quantum channels, and relevant norms along with their properties.
Twirling quantum channels and quantum expanders are presented in Sections \ref{sec:intro t-design} and \ref{sec:intro qe}, respectively, together with essential mathematical preliminaries.
Section \ref{sec:bernstein} reviews a matrix version of Bernstein’s inequality.
Section \ref{sec:concentration of CPs} explores the concentration phenomena of random quantum channels. It begins with a general concentration result for quantum channels with random Kraus operators, stated in Section \ref{sec:general concentration statement}. It is followed by detailed analysis of typical behaviors of random tensor unitary channels in Section \ref{sec:tensor unitary channels}; the first part focuses on almost twirling quantum channels, and the second on quantum expanders.
Section \ref{sec:new models} then introduces new random models of quantum channels, extending the previous discussions.
Finally, Section \ref{sec:disc} concludes with further discussions.

\section{Preliminaries}\label{sec:basics}
\subsection{Basics for matrices and super-operators} \label{sec:mats}
Let $M(m,n)$ be the linear space of $m \times n$ complex matrices and $M(n) = M(n,n)$.
The set of quantum states on $\mathbb C^n$ is defined by
\eq{
S(n) = \{\rho \in M(n) \,:\, \rho^* = \rho, \, \rho \geq 0, \, \trace \rho = 1\} \ ,
}
where $*$ is the adjoint operation. For a quantum state $\rho \in S(n)$, von Neumann entropy $H(\cdot)$ is defined by
\eq{
H(\rho) = - \sum_{i=1}^n \lambda_i \log \lambda_i \ ,
}
where $\lambda_i$'s are eigenvalues of $\rho$. 
For a matrix $X \in M(m, n)$, Schatten $p$-norm for $p \geq 1$ and  Schatten $\infty$-norm, i.e. operator norm, are respectively defined as 
\eq{
\pnorm{p}{X} = \left(\trace|X|^p \right)^{\frac{1}{p}}, \qquad
\pnorm{\infty}{X} = \lim_{p \to \infty} \pnorm{p}{X} \ ,
}
where $|X|=\sqrt{X^*X}$.
Then, for a linear map $\Theta: M(n) \to M(m)$, called a super-operator, define ($q \to p$)-norm, for $1 \leq p, q \leq \infty$, as 
\eq{
\|\Theta\|_{q \to p}  = \max \left\{ \pnorm{p}{\Theta(X)} \,:\, X \in M(n) \text{ s.t. } \pnorm{q}{X}  \leq 1 \right\} \ . 
}

Now, we introduce a lemma, which translates consequences of Bernstein's inequality to statements for quantum channels. 
\begin{lemma}\label{lemma:bound}
For a super-operator $\Theta: M(n) \to M(m)$, 
\eq{\label{eq:bounding norm}
\max_{\rho \in S(n)}  \pnorm{2}{\Theta(\rho) }\leq \pnorm{2\to 2}{\Theta} \ .
}
\end{lemma}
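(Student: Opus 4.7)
The plan is to observe that every quantum state has Schatten $2$-norm bounded by one, and then to apply the definition of the $(2\to 2)$-norm directly.

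More concretely, the first step is to control $\pnorm{2}{\rho}$ for an arbitrary $\rho \in S(n)$. Writing the spectral decomposition $\rho = \sum_{i=1}^n \lambda_i |v_i\rangle\langle v_i|$ with $\lambda_i \geq 0$ and $\sum_i \lambda_i = 1$, one gets
\eq{
\pnorm{2}{\rho}^2 = \sum_{i=1}^n \lambda_i^2 \leq \left(\sum_{i=1}^n \lambda_i\right)^2 = 1,
}
so $\pnorm{2}{\rho} \leq 1$ for every $\rho \in S(n)$.

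The second step is immediate from the definition of the operator norm between Schatten spaces: since $\pnorm{2}{\rho} \leq 1$, $\rho$ lies in the unit ball used to define $\pnorm{2 \to 2}{\Theta}$, and therefore $\pnorm{2}{\Theta(\rho)} \leq \pnorm{2 \to 2}{\Theta}$. Taking the maximum over $\rho \in S(n)$ yields the claimed inequality.

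There is no real obstacle here; the only subtlety is the power-mean inequality $\sum \lambda_i^2 \leq (\sum \lambda_i)^2$ for nonnegative numbers, which is what allows us to embed the set of quantum states into the Schatten $2$-ball. The lemma is essentially a convenient repackaging so that later applications of Bernstein's inequality, which naturally produce bounds on $\pnorm{2 \to 2}{\Theta}$, can be transferred to statements about $\pnorm{2}{\Theta(\rho)}$ uniformly over quantum states.
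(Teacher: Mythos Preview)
Your proof is correct and follows exactly the same approach as the paper: the paper's one-line proof simply asserts the inclusion $S(n) \subseteq \{X \in M(n) : \pnorm{2}{X} \leq 1\}$, and you have spelled out explicitly (via the spectral decomposition and $\sum_i \lambda_i^2 \leq (\sum_i \lambda_i)^2$) why this inclusion holds.
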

\begin{proof}
The inclusion $S(n) \subseteq \{X \in M(n) \,:\,  \pnorm{2}{X} \leq 1\}$
shows the claim. 
\end{proof}
Note that there is an optimal matrix $\rho$ for the LHS of \eqref{eq:bounding norm} which saturates the bound $\pnorm{2}{\,\cdot\,}\leq 1$, that is, the maximum is attained by a rank-one projection.
Indeed, a spectral decomposition of a quantum state $\rho$ can be expressed as
\eq{
\rho = \sum_{i=1}^n \pi_i |v_i \rangle \langle v_i| \ .
}
Here, we used the bra-ket notation; $\langle v|$ is the dual of a vector $|v\rangle$.
The set of vectors $\{|v_i\rangle \}_{i=1}^n$ forms an orthonormal basis and $\{\pi_i\}_{i=1}^n$ a probability distribution. Then, 
\eq{
\pnorm{2}{\Theta(\rho)} 
&= \pnorm{2}{\sum_{i=1}^n \pi_i\Theta( |v_i \rangle \langle v_i| )} \leq \sum_{i=1}^n \pi_i\pnorm{2}{\Theta( |v_i \rangle \langle v_i| )} \\
&\leq \max \{ \pnorm{2}{\Theta( |v \rangle \langle v| )} \,:\, \pnorm{\text{euc}}{|v\rangle} = 1\} \ . 
}
Here, $\pnorm{\text{euc}}{\,\cdot\,}$ is the Euclidean norm of vectors. Hence $\pnorm{\text{euc}}{|v\rangle} = 1$ implies that 
$|v\rangle\langle v|$ is a rank-one projection, i.e. a quantum state, and $\pnorm{2}{|v \rangle \langle v|} = 1$.
However, the bound \eqref{eq:bounding norm} is not tight in general. Unlike the LHS, $\pnorm{2}{\Theta(\cdot)}: \{X \in M(n) : \pnorm{2}{X}\leq 1\} \to [0, \infty)$ lacks a similar property. If $\Theta$ is a quantum channel, a Hermitian input can attain $\pnorm{2\to 2}{\Theta}$ \cite{watrous2005notes}, but when the channel is noisy, mixed inputs are likely to give larger output $2$-norm. Even worse, $\Theta$ will be the difference of two super-operators in our manuscript so that there may even be no Hermitian inputs achieving $\pnorm{2\to 2}{\Theta}$.

\subsection{Completely positive super-operators in Kraus representation}\label{sec:CP}
In this subsection, we define positive and completely positive (CP) super-operators, and then quantum channels. More details can be found for example in \cite{watrous2018theory} \cite{wilde2013quantum} as well as many other standard quantum information textbooks. 

\begin{definition}
For a super-operator $\Phi: M(n) \to M(m)$,
\begin{enumerate}
\item $\Phi$ is positive if the following statement is true:
\eq{
\rho \geq 0 \quad \Longrightarrow \quad  \Phi(\rho) \geq 0 \ .
}
\item $\Phi$ is CP if $\Phi \otimes 1_{M(\ell)}$ is positive for any $\ell \in \mathbb N$, where $1_{M(\ell)}$ is the identity on $M(\ell)$.
\end{enumerate}
\end{definition}

Here is a well-known fact:
\begin{proposition}
A super-operator $\Phi: M(n) \to M(m)$ is CP if and only if
$\Phi$ can be written in Kraus form:
\eq{
\label{eq:kraus}
\Phi(X) = \sum_{i=1}^k A_i X A_i^*    
}
for some $A_1, \ldots, A_k \in M(m, n)$. 
\end{proposition}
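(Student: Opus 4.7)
The proof is an iff, so I would split it into the easy direction and the substantial one.

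\textbf{Sufficient direction.} Assume $\Phi$ has the Kraus form \eqref{eq:kraus}. For any $\ell \in \mathbb N$ and any positive semidefinite $Y \in M(n) \otimes M(\ell) = M(n\ell)$, a direct computation gives
\eq{
(\Phi \otimes 1_{M(\ell)})(Y) = \sum_{i=1}^k (A_i \otimes I_\ell)\, Y\, (A_i \otimes I_\ell)^* \ ,
}
which is a sum of matrices of the form $B Y B^*$ with $Y\geq 0$, hence positive semidefinite. This settles the ``if'' direction with essentially no work.

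\textbf{Necessary direction.} The plan is to use the Choi--Jamio\l kowski construction. Let $|\Omega\rangle = \sum_{i=1}^n |i\rangle \otimes |i\rangle \in \mathbb C^n \otimes \mathbb C^n$ and define the Choi matrix
\eq{
C_\Phi = (\Phi \otimes 1_{M(n)})(|\Omega\rangle\langle \Omega|) = \sum_{i,j=1}^n \Phi(|i\rangle\langle j|) \otimes |i\rangle\langle j| \in M(mn) \ .
}
Since $|\Omega\rangle\langle \Omega|\geq 0$ and $\Phi$ is CP (applied with $\ell=n$), we have $C_\Phi \geq 0$. Spectral decomposition then produces $C_\Phi = \sum_{i=1}^k |v_i\rangle\langle v_i|$ with $|v_i\rangle \in \mathbb C^m \otimes \mathbb C^n$.

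The key step is to identify each $|v_i\rangle$ with a matrix $A_i \in M(m,n)$ through the vectorization isomorphism $\text{vec}: M(m,n) \to \mathbb C^m \otimes \mathbb C^n$ characterized by $\text{vec}(A) = (A \otimes I_n)|\Omega\rangle$, i.e.\ $A_i \in M(m,n)$ is defined by $(A_i)_{a,b} = \langle a,b|v_i\rangle$. One then needs the elementary identity
\eq{
(A \otimes I_n)|\Omega\rangle\langle \Omega|(B^* \otimes I_n) = \sum_{i,j=1}^n A|i\rangle\langle j|B^* \otimes |i\rangle\langle j| \ ,
}
which on the first tensor factor reproduces $A X B^*$ when applied to $X = |i\rangle\langle j|$. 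Consequently
\eq{
C_\Phi = \sum_{i=1}^k (A_i \otimes I_n)|\Omega\rangle\langle \Omega|(A_i^* \otimes I_n) = \sum_{i,j=1}^n \left(\sum_{\ell=1}^k A_\ell |i\rangle\langle j| A_\ell^*\right) \otimes |i\rangle\langle j| \ .
}
Comparing the matrix coefficients of $|i\rangle\langle j|$ in the two expressions for $C_\Phi$ gives $\Phi(|i\rangle\langle j|) = \sum_\ell A_\ell |i\rangle\langle j| A_\ell^*$, and by linearity $\Phi(X) = \sum_\ell A_\ell X A_\ell^*$ for all $X \in M(n)$.

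\textbf{Main obstacle.} The only nontrivial step is the book-keeping around the vectorization isomorphism: one must be careful to verify that the map sending $|v\rangle \in \mathbb C^m \otimes \mathbb C^n$ to the unique $A \in M(m,n)$ with $|v\rangle = (A\otimes I_n)|\Omega\rangle$ is well defined and linear, and that it intertwines the action $|v\rangle\langle v| \mapsto $ (first-factor partial trace-like extraction) with $X \mapsto A X A^*$ via the identity displayed above. Everything else, including uniqueness of the Kraus rank (which is not asserted here), is a consequence of this correspondence.
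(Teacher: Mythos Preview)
Your proof is correct and follows the standard Choi--Jamio\l kowski argument. Note, however, that the paper does not supply its own proof of this proposition: it is introduced with ``Here is a well-known fact'' and stated without proof, deferring to the cited textbooks \cite{watrous2018theory,wilde2013quantum}. So there is nothing to compare against; your argument is exactly the kind of proof one would find in those references.
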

\begin{definition}
For a CP super-operator $\Phi: M(n) \to M(m)$, we can define the following two conditions dual to each other. The notations in \eqref{eq:kraus} are used below. 
\begin{enumerate}
\item $\Phi$ is called a quantum channel if it preserves trace, i.e. 
for any $X \in M(n)$
\eq{
&\trace [\Phi (X)] = \trace [X] \ ,
\quad\text{which is equivalent to}\\
&\sum_{i=1}^k A_i^* A_i = I_n \ .
}
\item $\Phi$ is called unital if it maps the identity to the identity, i.e.
\eq{
&\Phi (I_n) = I_m \ ,
\quad\text{which is equivalent to}\quad \\
&\sum_{i=1}^k A_i A_i^* = I_m \ .
}
\end{enumerate}
\end{definition}

Now, we identify matrices as vectors and CP super-operators as matrices. To this end, first define the following linear isometric identification map for matrices:
\eq{\label{eq:id vec}
\widehat \,\, : M(n) &\to  \mathbb C^n \otimes \mathbb C^n = \mathbb C^{n^2}\\
|x \rangle \langle y| &\mapsto |x \rangle \otimes|\bar y \rangle \ .
}
Here, the isometric property implies $\pnorm{2}{X} = \pnorm{\text{euc}}{\widehat X} $ for $X \in M(m,n)$.
Then, it naturally defines an accompanying map for CP super-operators, so that \eqref{eq:kraus} is represented as 
\eq{\label{eq:id mat}
\widehat{\Phi(X)} = \widehat\Phi \widehat X = \left[\sum_{i=1}^k A_i \otimes \bar{A_i}\right] \widehat X \ .
}
Here, $\widehat\Phi \in M(m^2, n^2)$, $\widehat X \in \mathbb C^{n^2}$ and $\widehat{\Phi(X)} \in \mathbb C^{m^2}$.

The identification map in \eqref{eq:id mat} can be linearly extended to all super-operators. 
Finally, let us point out an important fact:  
\eq{
\pnorm{2 \to 2}{\Theta} = \pnorm{\infty}{\widehat \Theta}
}
for any super-operator $\Theta: M(n) \to M(m)$.

For further details on this identification, readers can consult \cite{bengtsson2017geometry},  where it is described in terms of matrix reshaping and reshuffling, and \cite{watrous2018theory}, where it is called natural representation. 

Here are the well-known statements for the spectral radius and norm of a quantum channel:
\begin{proposition}\label{proposition:spectral radius}
Let $\Phi :  M(n) \to M(n)$ be a quantum channel.
\begin{enumerate}
\item The spectral radius of $\widehat\Phi$ is $1$,
i.e. the maximum of the absolute values of the eigenvalues is $1$. 
\item The spectral norm of $\widehat \Phi$, which is $\pnorm{\infty}{\widehat\Phi}$, 
is always not less than $1$, and it is $1$ if and only if $\Phi$ is unital.
\end{enumerate}
\end{proposition}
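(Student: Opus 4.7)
For part (1), I will establish the two inequalities separately. The lower bound on the spectral radius comes from trace preservation: $\sum_i A_i^* A_i = I_n$ is exactly the statement that the Hilbert--Schmidt adjoint $\Phi^\dagger(Y) = \sum_i A_i^* Y A_i$ is unital, and hence $\Phi^\dagger(I_n) = I_n$. A direct computation from \eqref{eq:id mat} identifies $\widehat{\Phi^\dagger}$ with $\widehat{\Phi}^*$ (the matrix conjugate transpose), so $1$ is an eigenvalue of $\widehat{\Phi}^*$ and hence, being self-conjugate, also of $\widehat{\Phi}$. For the upper bound I control $\pnorm{1 \to 1}{\Phi^k}$ uniformly in $k$: splitting an arbitrary $X = H_1 + iH_2$ into self-adjoint parts and Jordan-decomposing each $H_j = H_j^+ - H_j^-$, trace preservation applied to the positive pieces yields $\pnorm{1}{\Phi(X)} \le 2 \pnorm{1}{X}$. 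Since every $\Phi^k$ is again a quantum channel, the same bound holds for $\Phi^k$. An eigenvalue $\lambda$ with $|\lambda| > 1$ would then force $|\lambda|^k \pnorm{1}{X}$ to blow up, a contradiction.

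For part (2), the inequality $\pnorm{\infty}{\widehat{\Phi}} \ge 1$ is immediate because the spectral norm dominates the spectral radius that part (1) identified as $1$. For the equivalence with unitality, I use $\pnorm{2 \to 2}{\Phi} = \pnorm{\infty}{\widehat{\Phi}}$ and test on $X_0 = I_n/\sqrt{n}$, which has unit Schatten $2$-norm. Trace preservation makes $\Phi(I_n)/n$ a density matrix, and any density matrix $\sigma$ on $\mathbb{C}^n$ satisfies $\pnorm{2}{\sigma} \ge 1/\sqrt{n}$ with equality iff $\sigma = I_n/n$. Hence $\pnorm{2}{\Phi(X_0)} \ge 1$ with equality iff $\Phi$ is unital, immediately giving the ``only if'' direction.

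For the ``if'' direction, assume $\Phi$ is unital. On Hermitian inputs, positivity combined with trace preservation yields a $1$-norm contraction via the Jordan decomposition, while positivity combined with unitality yields an $\infty$-norm contraction by sandwiching $-\pnorm{\infty}{H}\,I_n \le H \le \pnorm{\infty}{H}\,I_n$ and applying $\Phi$. Riesz--Thorin interpolation between these two bounds then delivers $\pnorm{2}{\Phi(H)} \le \pnorm{2}{H}$ for Hermitian $H$. To extend to general $X = H_1 + iH_2$, a short trace computation using Hermiticity-preservation of $\Phi$ shows $\pnorm{2}{\Phi(X)}^2 = \pnorm{2}{\Phi(H_1)}^2 + \pnorm{2}{\Phi(H_2)}^2$, paralleling the Cartesian identity $\pnorm{2}{X}^2 = \pnorm{2}{H_1}^2 + \pnorm{2}{H_2}^2$. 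The most delicate step I anticipate is precisely this lift from Hermitian to arbitrary inputs: purely positivity-based arguments do not yield a $2$-norm contraction outside the self-adjoint subspace, so the vanishing of the commutator cross-terms under the trace is essential.
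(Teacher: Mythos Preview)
The paper does not prove this proposition: it is stated as ``well-known'' and only followed by the remark that positivity (rather than complete positivity) already suffices and that a fixed quantum state always exists. So there is no paper proof to compare against; your write-up supplies what the paper omits.

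Your argument is sound overall. Part~(1) is clean: the identification $\widehat{\Phi^\dagger}=\widehat{\Phi}^{\,*}$ together with $\Phi^\dagger(I)=I$ gives the eigenvalue $1$, and the power-bounded $\pnorm{1\to 1}{\cdot}$ estimate kills any eigenvalue of modulus $>1$. For part~(2), the ``only if'' direction via the test vector $I_n/\sqrt{n}$ and the strict convexity of $\pnorm{2}{\cdot}$ on states is correct, as is the Cartesian lift in the last paragraph (the commutator cross-terms indeed vanish under the trace).

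The one step that deserves a caveat is your appeal to Riesz--Thorin on the \emph{real} subspace of Hermitian matrices. The standard Riesz--Thorin theorem is a complex-interpolation statement; on a real subspace it does not automatically yield constant~$1$. Two clean fixes are available. First, Kadison's inequality for positive unital maps on self-adjoint elements gives $\Phi(H)^2\le\Phi(H^2)$ directly, whence $\pnorm{2}{\Phi(H)}^2=\trace\Phi(H)^2\le\trace\Phi(H^2)=\trace H^2=\pnorm{2}{H}^2$ by trace preservation---this bypasses interpolation entirely and matches the paper's remark that mere positivity suffices. Second, Russo--Dye gives $\pnorm{\infty\to\infty}{\Phi}=\pnorm{\infty}{\Phi(I)}=1$ on all of $M(n)$ (not just Hermitians) for any positive unital $\Phi$, and by duality $\pnorm{1\to 1}{\Phi}=1$; then the usual complex Riesz--Thorin on $M(n)$ applies directly and your Cartesian lift becomes unnecessary. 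Either route closes the gap.
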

Note that the claims hold even if complete positivity is replaced by positivity. Also, remember that a quantum channel always has a fixed quantum state. 

\subsection{\texorpdfstring{Unitary $t$-design and twirling channels}{Unitary t-design and twirling channels}}\label{sec:intro t-design}
\label{sec:intro t-design}
Let $\mathcal U(d)$ be the $d\times d$ unitary matrix group with the Haar probability measure. A finite subset $\mathcal W \subset \mathcal U(d)$ with the uniform probability is called a unitary $t$-design if 
\eq{\label{eq:t-design}
\E_{U \in \mathcal W} \left[U^{\otimes t} \otimes \bar U^{\otimes t}\right]
= 
\E_{U \in \mathcal U(d)} \left[U^{\otimes t} \otimes \bar U^{\otimes t}\right] \ .
}
Now, define the following notations
\eq{
A^1 = A, \quad A^* = A^*, \quad A^- = \bar A, \quad A^T = A^T 
}
for a matrix $A$
and extend them to simple tensor products. 
That is, for a $t$-tuple
$\gamma \in \{1, *,  -, T\}^t$, 
we write:
\eq{
A^{\otimes \gamma} = \bigotimes_{j=1}^t A^{\gamma_j} \ .
}
Then, for a $k$-tuple of (possibly different) unitary $t$-designs $\boldsymbol{W} = (\mathcal W_1, \ldots, \mathcal W_k)$ for $\mathcal U(d)$, define the following random channel: for $X \in M(d^t)$,
\eq{\label{eq:unitary channel}
\Psi^{(\boldsymbol{W}, \gamma)} (X) 
= \frac{1}{k} \sum_{i=1}^k U_i^{\otimes \gamma} \,X \, (U_i^{\otimes \gamma})^* \ ,
}
where $U_i \in \mathcal W_i$ are independent.
Also, we define the twirling channel:
\eq{\label{eq:twirling the average}
\Omega^{(d, \gamma)}(X) = \E_{U \in \mathcal U(d)} \left[U^{\otimes \gamma} X  \left(U^{\otimes \gamma}\right)^*\right]
}
for $X \in M(d^t)$, so that by \eqref{eq:t-design} we have
\eq{\label{eq:t-design-gamma}
\E_{U \in \boldsymbol{W}} \left[\widehat\Psi^{(\boldsymbol{W}, \gamma)} \right]
&= \frac{1}{k} \sum_{i=1}^k\E_{U \in \mathcal W_i} \left[U^{\otimes \gamma} \otimes \bar U^{\otimes \gamma}\right] \\
&= \E_{U \in \mathcal U(d)} \left[U^{\otimes \gamma} \otimes \bar U^{\otimes \gamma}\right] 
= \widehat\Omega^{(d, \gamma)} \ .
}
In Section \ref{sec:tensor unitary channels}, we show
$\widehat \Psi^{(\boldsymbol{W}, \gamma)}$ concentrates
around $\widehat \Omega^{(d, \gamma)}$ by Bernstein's inequality.  

For example, set $\gamma = (1^{\times t})$ and for all $X \in M(d^t)$ we have
\eq{
\Omega^{(d, \gamma)} (X) = \sum_{\alpha, \beta \in S_t}\, \trace\left[ P_{\alpha^{-1}} X  \right]\wg(\alpha^{-1}\beta, d) \, P_\beta   \ .
}
Here, $S_t$ is the symmetric group of $t$ elements and $\wg(\,\cdot\, ,d)$ is the Weingarten function \cite{collins2006integration}, which is a class function of $S_t$ defined for each $d$. 
Also $P_\alpha$ is defined as
\eq{\label{eq:permutation}
P_\alpha : (\mathbb C^d)^{\otimes t} &\to  (\mathbb C^d)^{\otimes t}\\
|v^{(1)} \rangle \otimes \cdots \otimes |v^{(t)} \rangle & \mapsto |v^{(\alpha(1))} \rangle \otimes \cdots \otimes |v^{(\alpha(t))} \rangle \ .
}
Importantly, for any $\tau \in S_t$, 
\eq{
\Omega^{(d, \gamma)} (P_\tau) = P_\tau \ .
}
As Schur-Weyl duality states, the support of $\widehat \Omega^{(d, \gamma)}$ is spanned by $\{\widehat P_\tau\}_{\tau \in S_t}$, whose dimension we denote by $r$ for now. Taking into account the fact that $\widehat \Omega^{(d, \gamma)}$ is Hermitian and idempotent, we know that $\widehat \Omega^{(d, \gamma)}$ is a rank-$r$ projection. 

\begin{definition}[Almost twirling channels and randomizing channels]
\label{definition:almost}
For $\epsilon >0$, a quantum channel $\Phi: M(d^t) \to M(d^t)$ is called an $\epsilon$-twirling channel with respect to $\gamma \in \{1, *,  -, T\}^t$ if 
\eq{\label{eq:e twirling definition}
\max_{\rho \in S(d^t)} \pnorm{p}{\Phi(\rho) - \Omega^{(d, \gamma)}} < \epsilon d^{t(1/p-1)}
}
for $1 \leq p \leq 2$. When $t=1$, it is called an $\epsilon$-randomizing channel.
\end{definition}
Note that $\Omega^{(d, (1))}(X) = \trace[X]I/d$, which is the completely randomizing quantum channel.
In this paper, we work within the rage $1 \leq p \leq 2$. However, \eqref{eq:e twirling definition} was treated for $1 \leq p \leq \infty$ in \cite{aubrun2009almost} and \cite{lancien2020weak}.

To conclude this subsection, let us introduce a lemma which relates the $(2\to 2)$-norm and uniqueness of fixed quantum state.
\begin{lemma}[A unique fixed quantum state]
\label{lemma:fixed point}
For the quantum channel $\Omega = \Omega^{(d,(1))}$ defined in \eqref{eq:twirling the average}, suppose a quantum channel $\Phi$ satisfies:
\eq{
\pnorm{2 \to 2}{\Phi - \Omega} < 1
\qquad \text{or equivalently} \qquad \pnorm{\infty}{\hat\Phi - \hat\Omega} < 1\ .
}
Then, $\Phi$ has a unique fixed quantum state. 
\end{lemma}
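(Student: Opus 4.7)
The plan is to reduce uniqueness to a short contradiction argument based on the contraction-type hypothesis $\pnorm{2\to 2}{\Phi - \Omega} < 1$, combined with the fact (stated right after Proposition \ref{proposition:spectral radius}) that any quantum channel already has at least one fixed quantum state. So only uniqueness needs to be established.

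Assume for contradiction that $\Phi$ admits two distinct fixed quantum states $\rho_1, \rho_2 \in S(d^t)$, and set $\Delta = \rho_1 - \rho_2 \neq 0$. Then $\Delta$ is Hermitian, traceless, and satisfies $\Phi(\Delta) = \Delta$. The key point is that the completely depolarizing channel $\Omega(X) = \trace[X] \, I/d$ annihilates traceless matrices; in particular $\Omega(\Delta) = 0$. Consequently
\eq{
(\Phi - \Omega)(\Delta) = \Phi(\Delta) - \Omega(\Delta) = \Delta \ .
}

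Applying the definition of the $(2\to 2)$-norm to the nonzero matrix $\Delta$ gives
\eq{
\pnorm{2}{\Delta} = \pnorm{2}{(\Phi - \Omega)(\Delta)} \leq \pnorm{2 \to 2}{\Phi - \Omega} \cdot \pnorm{2}{\Delta} < \pnorm{2}{\Delta} \ ,
}
which is a contradiction. Hence $\Delta = 0$, i.e., $\rho_1 = \rho_2$, and the fixed quantum state is unique.

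There is no real obstacle here: the argument is essentially a one-liner once one observes that the difference of two fixed states is traceless and therefore lies in the kernel of $\Omega$, converting the hypothesis into a strict contraction on precisely the relevant subspace. The equivalence $\pnorm{2\to 2}{\Phi - \Omega} = \pnorm{\infty}{\widehat\Phi - \widehat\Omega}$ noted in Section \ref{sec:CP} is not needed for the proof itself, but it justifies the ``or equivalently'' in the hypothesis.
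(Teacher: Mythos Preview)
Your proof is correct and essentially identical to the paper's: both cite existence of a fixed state (the paper via Brouwer, you via the remark after Proposition \ref{proposition:spectral radius}) and then derive a contradiction from $(\Phi-\Omega)(\rho_1-\rho_2)=\rho_1-\rho_2$ using $\Omega(\rho_1-\rho_2)=0$ and the norm hypothesis. The only cosmetic difference is that the paper routes the same computation through a triangle-inequality step, whereas you observe the equality $(\Phi-\Omega)(\Delta)=\Delta$ directly.
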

\begin{proof}
By Brouwer's fixed-point theorem, there is at least one fixed quantum state. Now, suppose for a contradiction that there are two different fixed quantum states $\rho, \sigma$. Then, since $\Omega(X) = \trace[X]I/d$, we have
\eq{
\pnorm{2}{\rho - \sigma} &= \pnorm{2}{\Phi(\rho) - \Phi(\sigma)}
= \pnorm{2}{\Phi(\rho - \sigma)}\\
&\leq \pnorm{2}{(\Phi- \Omega)(\rho - \sigma)} + \pnorm{2}{\Omega(\rho - \sigma)} \\
&\leq \pnorm{2 \to 2}{\Phi- \Omega} \pnorm{2}{\rho - \sigma}
}
giving a contradiction. 
\end{proof}

\subsection{Quantum expanders} \label{sec:intro qe}
Following \cite{hastings2007random} and \cite{hastings2009classical}, we adopt the following definition of quantum expanders.
\begin{definition}[quantum expanders]\label{definition:quantum expander}
A sequence of quantum channels $\{\Phi^{(d)}\}_{d=1}^\infty$ is called a quantum $(1-\epsilon)$-expander if the following individual sets of conditions are satisfied.
\begin{enumerate}
\item When $\Phi^{(d)}$ is in the form of \eqref{eq:kraus} with $m=n=d$,
\begin{enumerate}[label=(\roman*)]
\item \label{condition:small env}
Let $k(d)$ be the number of Kraus operators of $\Phi^{(d)}$, and then $k(d)/d^2 \to  0$. 
\item \label{condition:gap}
Let $\lambda_2$ be the second largest eigenvalue of $\Phi^{(d)}$ in modulus, then $|\lambda_2| < \epsilon$. 
\item \label{condition:noisy}
$\Phi^{(d)}$ is unital or $\Phi^{(d)}$ has a unique fixed quantum state and its von Neumann entropy diverges. 
\end{enumerate}

\item When $\Phi^{(d)}$ is in the form of \eqref{eq:tensor-channel},
\begin{enumerate}[label=(\roman*)]
\item \label{condition:small env tensor}
Let $k(d)$ be the number of Kraus operators of $\Phi^{(d)}$, and then $k(d)/d^{2t} \to  0$. 
\item \label{condition:gap tensor}
Let $r(d) = \mathrm{rank}(\widehat \Omega^{(d, (1^{\times t}))})$ and 
$\lambda_{r(d)+1}$ be the $(r(d)+1)$-th largest eigenvalue of $\widehat\Phi^{(d)}$ in modulus, then $|\lambda_{r(d)+1} |<\epsilon$. 
\end{enumerate}
\end{enumerate}

\end{definition}

Next, let us introduce another lemma for the spectral gaps. One can consult \cite{bhatia2013matrix} for Weyl's perturbation and majorant theorems, which we use in the proof. 
\begin{lemma}[Spectral gaps of quantum channels]
\label{lemma:spectral gap}
Let $\Omega =  \Omega^{(d, (1^{\times t}))}$ with $t \in \mathbb N$ and $r = r(d, t) = \mathrm{rank}(\widehat \Omega)$. 
Suppose a quantum channel $\Phi: M(d^t) \to M(d^t)$ satisfies: for $\delta >0$
\eq{
\pnorm{2 \to 2}{\Phi - \Omega} \leq \delta, 
\qquad \text{or equivalently} \qquad \pnorm{\infty}{\hat\Phi - \hat\Omega} \leq \delta \ . 
}
Let $\{\lambda_i(\cdot)\}_{i=1}^{d^{2t}}$ be the eigenvalues of a matrix in non-increasing order in modulus. 
\begin{enumerate}
\item Suppose $t=1$ and $\Phi$ is a quantum channel defined in \eqref{eq:kraus}. Then, $r=1$ and
\eq{
\left|\lambda_1(\widehat\Phi) \right| =  1 \qquad \text{and} \qquad 
\left|\lambda_2(\widehat\Phi) \right| \leq \delta (1 + \delta) \ .
}
\item Suppose $\Phi$ is a quantum channel defined in \eqref{eq:tensor-channel}. Then,
\eq{
\left|\lambda_r(\widehat\Phi) \right| =  1 \qquad \text{and} \qquad 
\left|\lambda_{r+1}(\widehat\Phi) \right| \leq \delta \ .
}
\end{enumerate}
\end{lemma}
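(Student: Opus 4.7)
The plan rests on three ingredients common to both parts: (a) by Schur--Weyl duality, recorded just before Definition~\ref{definition:almost}, $\widehat\Omega$ is an orthogonal projection of rank $r$, so its singular values are $1$ with multiplicity $r$ and $0$ otherwise; (b) by Proposition~\ref{proposition:spectral radius}, $|\lambda_1(\widehat\Phi)| = 1$; and (c) Weyl's perturbation theorem applied to $\widehat\Phi = \widehat\Omega + (\widehat\Phi - \widehat\Omega)$ yields $s_i(\widehat\Phi) \leq s_i(\widehat\Omega) + \pnorm{\infty}{\widehat\Phi - \widehat\Omega}$, so $s_1(\widehat\Phi),\ldots,s_r(\widehat\Phi) \leq 1 + \delta$ and $s_i(\widehat\Phi) \leq \delta$ for $i > r$.

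For part (1), $r = 1$ since $\Omega^{(d,(1))}$ is the rank-one orthogonal projection onto $\widehat I/\sqrt d$. I would then invoke Weyl's majorant theorem at $k = 2$:
\eq{
\left|\lambda_1(\widehat\Phi)\,\lambda_2(\widehat\Phi)\right| \leq s_1(\widehat\Phi)\,s_2(\widehat\Phi) \leq (1+\delta)\delta,
}
and divide by $|\lambda_1(\widehat\Phi)| = 1$ to obtain the stated bound. Only the generic quantum-channel structure of $\Phi$ is used in this step.

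For part (2), applying Weyl's majorant at level $r+1$ would introduce a spurious factor $(1+\delta)^r$, so I would instead exploit the tensor-unitary form of $\Phi$ directly. The identity $U^{\otimes t} P_\tau (U^*)^{\otimes t} = P_\tau$, valid for every $\tau \in S_t$ and every $U \in \mathcal U(d)$, forces both $\Phi(P_\tau) = P_\tau$ and $\Phi^*(P_\tau) = P_\tau$. Consequently the subspace $V := \mathrm{span}\{\widehat P_\tau\}_{\tau \in S_t}$, which by Schur--Weyl equals $\mathrm{range}(\widehat\Omega)$, lies in the fixed-point sets of both $\widehat\Phi$ and $\widehat\Phi^*$, so $V$ and $V^\perp$ are each $\widehat\Phi$-invariant. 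In an orthonormal basis adapted to $V \oplus V^\perp$ one reads off
\eq{
\widehat\Phi = \begin{pmatrix} I_r & 0 \\ 0 & D \end{pmatrix}, \qquad \widehat\Omega = \begin{pmatrix} I_r & 0 \\ 0 & 0 \end{pmatrix},
}
whence $\pnorm{\infty}{D} = \pnorm{\infty}{\widehat\Phi - \widehat\Omega} \leq \delta$. The spectrum of $\widehat\Phi$ is then $\{1\}$ with multiplicity $r$ together with the spectrum of $D$, whose eigenvalues are bounded in modulus by $\pnorm{\infty}{D} \leq \delta$. Sorting by modulus gives $|\lambda_r(\widehat\Phi)| = 1$ and $|\lambda_{r+1}(\widehat\Phi)| \leq \delta$.

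The most delicate step is securing block-\emph{diagonal} (not merely block upper-triangular) form in part (2): this hinges on the dual fixed-point identity $\Phi^*(P_\tau) = P_\tau$, without which a nonzero off-diagonal block from $V$ to $V^\perp$ could survive and spoil the operator-norm bound on $D$. In part (1) no such dual identity is available for a generic trace-preserving $\Phi$, which is precisely why I would retreat to Weyl's majorant there and accept the extra $(1+\delta)$ factor.
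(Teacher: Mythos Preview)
Your argument is correct. Part~(1) matches the paper's proof essentially line by line. Part~(2), however, departs from the paper's route: the paper also invokes Weyl's majorant there, at level $r+1$, but avoids the factor $(1+\delta)^r$ you worried about by first observing that the tensor-unitary channel $\Phi$ is \emph{unital}, whence Proposition~\ref{proposition:spectral radius}(2) gives $s_1(\widehat\Phi)=1$ and hence $s_1(\widehat\Phi)\cdots s_r(\widehat\Phi)\leq 1$; combined with $|\lambda_1(\widehat\Phi)|=\cdots=|\lambda_r(\widehat\Phi)|=1$ this yields
\[
|\lambda_{r+1}(\widehat\Phi)|\ \leq\ \frac{s_1(\widehat\Phi)\cdots s_{r+1}(\widehat\Phi)}{|\lambda_1(\widehat\Phi)|\cdots|\lambda_r(\widehat\Phi)|}\ \leq\ s_{r+1}(\widehat\Phi)\ \leq\ \delta.
\]
Your block-diagonal argument via the dual identity $\Phi^*(P_\tau)=P_\tau$ is a legitimate and arguably more transparent alternative: it exposes the exact structure of $\widehat\Phi$ on $V\oplus V^\perp$ and dispenses with Weyl's majorant entirely in part~(2). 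The paper's approach, by contrast, keeps a uniform toolkit (Weyl perturbation plus majorant) across both parts and isolates unitality as the single extra ingredient that distinguishes the tensor-unitary case from the generic one.
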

\begin{proof}
Let $\{s_i(\cdot)\}_{i=1}^{d^{2t}}$ be the singular values of a matrix in non-increasing order.  
By Weyl's perturbation theorem, for any $i \in [d^{2t}]$, 
\eq{
\left|s_i(\widehat\Phi) - s_i(\widehat \Omega) \right| 
\leq \pnorm{\infty}{\widehat \Phi - \widehat \Omega}
\leq \delta \ .
}
Since $\hat\Omega$ is a projection of rank $r$ we have
\eq{
s_r(\widehat\Omega)  = \lambda_r(\widehat\Omega)  = 1  \qquad \text{and} \qquad 
s_{r+1}(\widehat\Omega)  =\lambda_{r+1}(\widehat\Omega)   = 0
}

Now, we prove the first statement. 
Since $\left|\lambda_1(\widehat\Phi) \right| =  1$ by Proposition \ref{proposition:spectral radius}, use Weyl's majorant theorem in product form:
\eq{
|\lambda_2(\widehat\Phi)| \leq \frac{s_1(\widehat \Phi)  s_2(\widehat \Phi) }{|\lambda_1(\widehat \Phi)|} \leq \delta (1 + \delta) \ .
}

Next, we prove the second statement. Notice that for any permutation matrix $P_\tau$ in \eqref{eq:permutation}, 
\eq{
\Phi(P_\tau) = P_\tau \ .
}
This means that the eigenspace of the unit eigenvalue is at least $r$-dimensional, proving $\left|\lambda_r(\widehat\Phi) \right| = 1$.
Also, since $\Phi$ is unital $s_1(\widehat \Phi)=1$.
Using Weyl's majorant theorem again, 
\eq{
|\lambda_{r+1}(\widehat\Phi)| \leq \frac{s_1(\widehat \Phi) \cdots s_{r+1}(\widehat \Phi) }{|\lambda_1(\widehat \Phi)| \cdots |\lambda_r(\widehat \Phi)|} 
\leq s_{r+1}(\widehat \Phi) \leq \delta 
}
This completes the proof. 
\end{proof}

\subsection{Matrix version of Bernstein inequality}\label{sec:bernstein}
To begin this subsection, we quote the non-Hermitian-matrix version of Bernstein's inequality from \cite{tropp2012user} and \cite{tropp2015introduction}, where one can find its proofs.
\begin{proposition}\label{proposition:bernstein}
For a sequence of centered independent random matrices $X_1, \ldots, X_k \in M(d_1, d_2)$, define the following values:
\eq{
D &= d_1 + d_2, \quad
M = \max_{i \in [k]} \operatorname*{ess\,sup}_{X_i} \pnorm{\infty}{X_i}, \\
V &= \max \left\{ \pnorm{\infty}{\sum_{i=1}^k \E \left[X_iX_i^*\right]}, \quad  \pnorm{\infty}{\sum_{i=1}^k \E \left[X_i^*X_i\right]}\right\} \ .
}
Then, for $\alpha > 0$ we have 
\eq{\label{eq:bernstein bound}
\Pr \left( \pnorm{\infty}{\sum_{i=1}^k X_i} \geq \alpha \right)
\leq D \exp\left(- \frac{\alpha^2}{2(V + M\alpha /3)}\right) \ .
}
\end{proposition}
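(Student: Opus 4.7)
The plan is to reduce the non-Hermitian statement to the Hermitian case via the self-adjoint dilation, and then prove the Hermitian case through the Laplace-transform / Lieb-concavity machinery of Ahlswede--Winter--Tropp. Concretely, for each $X_i \in M(d_1,d_2)$ define
\eq{
Y_i = \begin{pmatrix} 0 & X_i \\ X_i^* & 0 \end{pmatrix} \in M(D), \qquad D = d_1+d_2.
}
The $Y_i$ are independent, centered, Hermitian, and satisfy $\pnorm{\infty}{Y_i} = \pnorm{\infty}{X_i} \leq M$ a.s. Moreover $Y_i^2$ is block diagonal with blocks $X_iX_i^*$ and $X_i^*X_i$, so $\pnorm{\infty}{\sum_i \E[Y_i^2]} = V$. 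Because the eigenvalues of the dilation are $\pm s_j(\sum_i X_i)$, one has $\pnorm{\infty}{\sum_i X_i} = \lambda_{\max}(\sum_i Y_i)$, so it suffices to prove the Hermitian version with parameters $(D,M,V)$.

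For the Hermitian statement, let $S = \sum_i Y_i$. The first step is a matrix Markov/Laplace bound: for any $\theta>0$,
\eq{
\Pr(\lambda_{\max}(S) \geq \alpha) \leq e^{-\theta \alpha}\,\E\trace \exp(\theta S),
}
using $e^{\theta \lambda_{\max}(S)} \leq \trace e^{\theta S}$. The second step is the master inequality obtained by iterating Lieb's concavity theorem for $A \mapsto \trace\exp(H + \log A)$, which yields
\eq{
\E \trace \exp\Bigl(\theta S\Bigr) \leq \trace \exp\Bigl(\sum_{i=1}^k \log \E[e^{\theta Y_i}]\Bigr).
}
This is the step I expect to be the main obstacle: verifying the concavity argument, handling the fact that the $Y_i$ do not commute, and peeling off the expectations one at a time using the tower property.

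The third step is the scalar Bernstein moment-generating-function bound, lifted to matrices: for $0 < \theta < 3/M$ and $Y$ centered Hermitian with $\pnorm{\infty}{Y} \leq M$ a.s., one checks by comparing the Taylor series of $e^{\theta y} - 1 - \theta y$ to $\theta^2 y^2 /(2(1 - \theta M/3))$ on $[-M,M]$ that
\eq{
\E[e^{\theta Y}] \preceq \exp\!\Bigl(\tfrac{\theta^2/2}{1-\theta M/3}\,\E[Y^2]\Bigr),
}
hence, via operator monotonicity of $\log$, $\log\E[e^{\theta Y_i}] \preceq \frac{\theta^2/2}{1 - \theta M/3}\E[Y_i^2]$. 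Combining this with the master inequality and using $\trace\exp(\cdot) \leq D\,\pnorm{\infty}{\exp(\cdot)}$ gives
\eq{
\Pr(\lambda_{\max}(S) \geq \alpha) \leq D \exp\!\Bigl(-\theta \alpha + \tfrac{\theta^2 V/2}{1 - \theta M/3}\Bigr).
}
Finally, I would optimize the right-hand side in $\theta \in (0, 3/M)$; the choice $\theta = \alpha/(V + M\alpha/3)$ (which lies in the admissible range) yields the exponent $-\alpha^2/(2(V + M\alpha/3))$, producing the bound \eqref{eq:bernstein bound}. Returning to the dilations then gives the stated non-Hermitian inequality.
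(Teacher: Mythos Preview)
Your proposal is correct and is precisely the standard Tropp argument (dilation to Hermitian, Laplace transform, Lieb-concavity master inequality, Bernstein MGF bound, optimization in $\theta$). The paper, however, does not give a proof of this proposition at all: it explicitly quotes the result from \cite{tropp2012user} and \cite{tropp2015introduction} and refers the reader there. So your write-up supplies exactly the proof the paper outsources, and in that sense it coincides with the paper's intended approach.
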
 
Note that Proposition \ref{proposition:bernstein} requires uniform bounds for the operator norms and the second moments of the independent random matrices $X_1, \ldots, X_k$, but they need not share the same distribution.

In \eqref{eq:bernstein bound}, the dimension factor $D$ can be replaced by so-called effective rank \cite{minsker2017some} for improvement with little trade-off, but we do not go in this direction. Readers can refer to \cite{tropp2015introduction} on this matter as well, where it is called intrinsic dimension. Also \cite{vershynin2018high} is a good reference for broader knowledge in concentration phenomena in high dimensional spaces.

\section{Concentration of CP super-operators} \label{sec:concentration of CPs}
\subsection{General statement of concentration}\label{sec:general concentration statement}
We apply the matrix version of Bernstein's inequality (Proposition \ref{proposition:bernstein}) to the ``matrix version of'' CP super-operators defined in \eqref{eq:id mat}.
\begin{proposition}\label{proposition:bernsetin kraus}
For a sequence of independent random matrices $A_1, \ldots, A_k \in M(m, n)$, define a random CP super-operator $\Phi: M(n) \to M(m)$ as follows: for $X \in M(n)$,
\eq{
\Phi(X) = \sum_{i=1}^k A_i X A_i^*  \ ,
}
which is equivalent to define a random matrix $\widehat\Phi \in M(m^2, n^2)$ by
\eq{
\widehat\Phi = \sum_{i=1}^k A_i \otimes \bar A_i \ .
}
Then, 
\eq{\label{eq:bernstein kraus}
\Pr \left( \pnorm{\infty}{ \widehat\Phi - \E [\widehat\Phi ]} \geq \alpha \right)
\leq D \exp\left(- \frac{\alpha^2}{2(V + M\alpha /3)}\right) \ . 
}
Here, $D = m^2 + n^2$, 
\eq{
&M = \max_{i \in [k]}  \operatorname*{ess\,sup}_{A_i} \pnorm{\infty}{A_i \otimes \bar A_i - \E \left[A_i \otimes \bar A_i \right]} \ , \\
&V = \max \Big\{ \\
&\pnorm{\infty}{\sum_{i=1}^k \E \left[A_iA_i^* \otimes \bar A_i A_i^T\right]- \E \left[A_i \otimes \bar A_i \right]\E \left[A_i^* \otimes  A_i^T\right]} , \\
& \left.\pnorm{\infty}{\sum_{i=1}^k \E \left[A_i^*A_i \otimes A_i^T\bar A_i\right] - \E \left[A_i^* \otimes A_i^T\right]\E \left[A_i \otimes \bar A_i \right]}\right\}
 . 
}
\end{proposition}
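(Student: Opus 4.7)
The plan is to apply the matrix Bernstein inequality (Proposition \ref{proposition:bernstein}) directly to the centered, independent sequence
\eq{
X_i := A_i \otimes \bar A_i - \E[A_i \otimes \bar A_i] \in M(m^2, n^2), \qquad i \in [k] \ .
}
By construction $\E[X_i] = 0$, and $\sum_{i=1}^k X_i = \widehat \Phi - \E[\widehat \Phi]$, so the left-hand side of \eqref{eq:bernstein kraus} is exactly the quantity controlled by Proposition \ref{proposition:bernstein}. The dimension parameter is $D = m^2 + n^2$ since $X_i \in M(m^2, n^2)$, and the uniform operator-norm bound is simply $M = \max_i \operatorname*{ess\,sup} \pnorm{\infty}{X_i}$, matching the stated expression term by term.

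The only computation of any substance is the identification of the variance proxy $V$. Writing $Y_i := A_i \otimes \bar A_i$ and $\mu_i := \E[Y_i]$, I would use the mixed-product rule $(A \otimes B)(C \otimes D) = AC \otimes BD$ together with the identity $(A_i \otimes \bar A_i)^* = A_i^* \otimes A_i^T$ to expand
\eq{
X_i X_i^* &= Y_i Y_i^* - Y_i \mu_i^* - \mu_i Y_i^* + \mu_i \mu_i^* \\
&= \left(A_i A_i^* \otimes \bar A_i A_i^T\right) - Y_i \mu_i^* - \mu_i Y_i^* + \mu_i \mu_i^* \ .
}
Taking expectation collapses the two cross terms against $\mu_i \mu_i^*$, leaving
\eq{
\E[X_i X_i^*] = \E\left[A_i A_i^* \otimes \bar A_i A_i^T\right] - \E[A_i \otimes \bar A_i]\E[A_i^* \otimes A_i^T] \ ,
}
which is exactly the first summand inside the $V$ of the statement. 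An entirely parallel expansion applied to $X_i^* X_i$, using $(A_i^* \otimes A_i^T)(A_i \otimes \bar A_i) = A_i^* A_i \otimes A_i^T \bar A_i$, yields the second summand.

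Summing over $i$, applying the operator norm, and taking the maximum of the two resulting expressions reproduces $V$ verbatim, whence \eqref{eq:bernstein kraus} follows immediately from Proposition \ref{proposition:bernstein}. I do not expect a serious obstacle here; the proposition is essentially a dictionary translation of matrix Bernstein into the natural-representation language of \eqref{eq:id mat}, and the one place worth care is tracking conjugates versus transposes in the right-hand tensor slot, since that slot carries $\bar A_i$ and therefore $A_i^T$ under conjugate transpose.
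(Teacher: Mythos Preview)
Your proposal is correct and follows exactly the paper's approach: apply Proposition \ref{proposition:bernstein} to the centered independent matrices $X_i = A_i \otimes \bar A_i - \E[A_i \otimes \bar A_i]$. The paper's own proof is a two-line affair that simply names these $X_i$ and declares the result, so your expansion of the variance terms is more detailed than what the paper actually records, but it is the same argument.
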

\begin{proof}
We apply Proposition \ref{proposition:bernstein} to the following centered random matrices:
\eq{
X_i = A_i \otimes \bar A_i - \E \left[A_i \otimes \bar A_i \right] \ .
}
This completes the proof. 
\end{proof}

As was discussed in the previous section, $M$ and $V$ must behave nicely for concentration, but since the $A_i$'s do not have to have the same distribution, there is room for a generalization of the results in Section \ref{sec:tensor unitary channels}, which is the next section. See Remark \ref{remark:uneven} and Assumption \ref{assumption:qe}. 

\subsection{Concentration of random mixed tensor product unitary quantum channels}
\label{sec:tensor unitary channels}
Now, based on the concepts and notations in Section \ref{sec:intro t-design}, we continue to discuss concentration phenomenon of almost twirling channels and quantum expanders. 
\begin{theorem}\label{theorem:t-design}
For $d, k, t \in \mathbb N$, a $k$-tuple of unitary $t$-designs $\boldsymbol{W} = (\mathcal 
 W_1, \ldots, \mathcal W_k)$ for $\mathcal U(d)$ and a $t$-tuple $\gamma \in \{1, *,  -, T\}^t$, 
the random quantum channel $\Psi^{(\boldsymbol{W}, \gamma)}$ defined in \eqref{eq:unitary channel} concentrates around the average $\Omega^{(d, \gamma)}$ defined in \eqref{eq:twirling the average}. For $0< \alpha \leq 1$, take $\displaystyle k = \frac{Ct \log d}{\alpha^2}$ with $C>12$ so that we have the following concentration:
\eq{
\Pr \left\{
\pnorm{\infty}{\widehat \Psi^{(\boldsymbol{W}, \gamma)} - \widehat \Omega^{(d, \gamma)}} \geq \alpha \right\}
\leq 2 d^{t \left( 2- \frac{C}{6}\right)} \ .
}
\end{theorem}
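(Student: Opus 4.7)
The natural approach is to invoke the CP super-operator form of Bernstein's inequality, Proposition \ref{proposition:bernsetin kraus}, applied to the normalized Kraus operators $A_i = U_i^{\otimes\gamma}/\sqrt{k}$, with $U_i$ sampled independently from $\mathcal{W}_i$. Each symbol in $\{1,*,-,T\}$ preserves unitarity, so $U_i^{\otimes\gamma}$ is a $d^t\times d^t$ unitary, $\widehat\Psi^{(\boldsymbol{W},\gamma)} = \sum_i A_i\otimes\bar A_i$, and the $t$-design identity \eqref{eq:t-design-gamma} provides the required centering $\E\bigl[\widehat\Psi^{(\boldsymbol{W},\gamma)}\bigr] = \widehat\Omega^{(d,\gamma)}$.

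The next step is to compute the parameters $D$, $M$, $V$ appearing in \eqref{eq:bernstein kraus}. Since $m=n=d^t$, we have $D=2d^{2t}$. Unitarity of $U_i^{\otimes\gamma}$ gives $\pnorm{\infty}{A_i\otimes\bar A_i} = 1/k$, while $\pnorm{\infty}{\widehat\Omega^{(d,\gamma)}}\leq 1$ since it is a convex average of unitaries; by the triangle inequality this yields $M\leq 2/k$. For $V$, the key observation is that unitarity collapses all four products $A_iA_i^*$, $A_i^*A_i$, $\bar A_iA_i^T$, $A_i^T\bar A_i$ to the deterministic scalar matrix $I_{d^t}/k$, so $\E[A_iA_i^*\otimes\bar A_iA_i^T]=I_{d^{2t}}/k^2$ and similarly for the dual expression. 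Combined with $\E[A_i^*\otimes A_i^T]=\bigl(\E[A_i\otimes\bar A_i]\bigr)^{*}=(\widehat\Omega^{(d,\gamma)})^{*}/k$ and the bound $\pnorm{\infty}{\widehat\Omega^{(d,\gamma)}(\widehat\Omega^{(d,\gamma)})^{*}}\leq 1$, each of the two sums in the definition of $V$ has operator norm at most $1/k$, so $V\leq 1/k$.

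Plugging these into \eqref{eq:bernstein kraus} with $k=Ct\log d/\alpha^2$ and $0<\alpha\leq 1$ gives
\begin{equation*}
\frac{\alpha^2}{2(V+M\alpha/3)} \;\geq\; \frac{k\alpha^2}{2+4\alpha/3} \;\geq\; \frac{k\alpha^2}{6} \;=\; \frac{Ct\log d}{6},
\end{equation*}
so the tail bound becomes $D\exp(-\,\cdot\,)\leq 2d^{2t}\exp(-Ct\log d/6) = 2d^{t(2-C/6)}$, matching the stated conclusion.

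The argument is essentially routine once the Bernstein parameters are in hand; the only step that asks for a little care is the bookkeeping of bars, transposes, and adjoints needed to see that $A_iA_i^*$, $\bar A_iA_i^T$, etc., all collapse to multiples of the identity, and the use of the $t$-design identity \eqref{eq:t-design-gamma} to replace $\E[A_i\otimes\bar A_i]$ by $\widehat\Omega^{(d,\gamma)}/k$. No structural obstacle beyond these manipulations is expected.
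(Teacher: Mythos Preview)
Your proposal is correct and follows essentially the same route as the paper: apply Proposition~\ref{proposition:bernsetin kraus} with $A_i=U_i^{\otimes\gamma}/\sqrt{k}$, use unitarity and the $t$-design identity to center and to bound $M$ and $V$, then substitute $k=Ct\log d/\alpha^2$. The only difference is that you obtain the sharper $V\leq 1/k$ (recognizing that $I-\widehat\Omega\widehat\Omega^*\geq 0$ with norm at most $1$) whereas the paper uses a cruder triangle-inequality estimate $V\leq 2/k$; you then relax to $k\alpha^2/6$ anyway to match the stated bound, so the arguments coincide.
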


\begin{proof}
To use Proposition \ref{proposition:bernsetin kraus}, set 
$
A_i =  U_i^{\otimes \gamma} /\sqrt{k}
$ and we calculate and bound the constants in the theorem. 
First, by triangle inequality and Jensen's inequality, we have
\eq{
M = \frac{1}{k}\max_{i \in [k]} \pnorm{\infty}{U_i^{\otimes \gamma} \otimes \overline{U_i^{\otimes \gamma} } - \E_{\mathcal W_i} \left[U_i^{\otimes \gamma} \otimes \overline{U_i^{\otimes \gamma} } \right]} 
\leq \frac{2}{k} \ . 
}
Next, similarly we have
\eq{
& U_i^{\otimes \gamma} \left(U_i^{\otimes \gamma}\right)^* = I, \quad \text{ and }\quad \\
& \pnorm{\infty}{\E_{\mathcal W_i} \left[U_i^{\otimes \gamma} \otimes \overline{U_i^{\otimes \gamma} }\right] 
\E_{\mathcal W_i} \left[\left(U_i^{\otimes \gamma}\right)^* \otimes \left(U_i^{\otimes \gamma}\right)^T\right] }\\
&\leq \pnorm{\infty}{U_i^{\otimes \gamma}}^4 = 1 \ ,
}
and so forth. Hence, we get
\eq{
V \leq k \cdot \frac{2}{k^2} = \frac{2}{k} \ .
}
Therefore, 
\eq{\label{eq:first or second}
2 \left(V+ \frac{M\alpha}{3} \right) 
\leq 2 \left(\frac{2}{k} + \frac{2\alpha}{3k} \right)
\leq \frac{6}{k} \ .
}
Then, since $m = n = d^{t}$, the bound in \eqref{eq:bernstein kraus} is upper-bounded in this case by 
\eq{\label{eq:balancing tail}
2 \exp \left( 
2t \log d - \frac{\alpha^2 k}{6}
\right)
= 2 \left(d^t\right)^{2- \frac{C}{6}}
}
for $\displaystyle k = \frac{Ct \log d}{\alpha^2}$. 
The bound shows concentration when $C>12$.
\end{proof}

\begin{corollary}[Almost twirling]\label{corollary:t-twirling}
For $\displaystyle 0<\epsilon \leq d^{t/2}$, take $\displaystyle k = \frac{Ct d^t \log d}{\epsilon^2}$ with $C>12$. Then, we have for $1 \leq p \leq 2$,
\eq{
&\Pr \left\{
\max_{\rho \in S(d^t)}\pnorm{p}{\Psi^{(\boldsymbol{W}, \gamma)}(\rho) -  \Omega^{(d, \gamma)}(\rho) } \geq \epsilon d^{t \left(\frac{1}{p} - 1 \right)}  \right\}\\
&\leq 2 d^{t \left( 2- \frac{C}{6}\right)} \ .
}
This means that $\Psi^{(\boldsymbol{W}, \gamma)}$ is typically an $\epsilon$-twirling channels. 
\end{corollary}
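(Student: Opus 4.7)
The plan is to reduce the corollary to Theorem \ref{theorem:t-design} via a purely deterministic conversion from the $(2\to 2)$-norm bound supplied by that theorem to the output Schatten $p$-norm bound required by Definition \ref{definition:almost}. The probabilistic heavy lifting is already done inside Theorem \ref{theorem:t-design}; only the rescaling needs verification.

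First I would set $\alpha = \epsilon/d^{t/2}$. The hypothesis $\epsilon \leq d^{t/2}$ secures $\alpha \leq 1$, and the stated $k = Ctd^t \log d/\epsilon^2$ matches $Ct\log d/\alpha^2$, so Theorem \ref{theorem:t-design} directly delivers the probability bound $2d^{t(2-C/6)}$ for the event $\{\|\widehat{\Psi}^{(\boldsymbol{W},\gamma)} - \widehat{\Omega}^{(d,\gamma)}\|_\infty \geq \alpha\}$. On the complementary event, I would chain three known facts: the identification $\|\Theta\|_{2\to 2} = \|\widehat{\Theta}\|_\infty$ from Section \ref{sec:CP}; Lemma \ref{lemma:bound} applied to $\Theta = \Psi^{(\boldsymbol{W},\gamma)} - \Omega^{(d,\gamma)}$, converting the operator norm of $\widehat{\Theta}$ to a uniform bound on $\|\Theta(\rho)\|_2$ over quantum states; and the standard Schatten interpolation inequality $\|X\|_p \leq d^{t(1/p-1/2)}\|X\|_2$ valid for $X \in M(d^t)$ and $1 \leq p \leq 2$ (a one-line consequence of H\"older applied to the vector of singular values with conjugate exponents $2/p$ and $2/(2-p)$). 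Composing these three inequalities yields $\|\Theta(\rho)\|_p \leq d^{t(1/p-1/2)} \alpha = \epsilon d^{t(1/p-1)}$ uniformly in $\rho \in S(d^t)$, which is exactly the $\epsilon$-twirling condition.

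No real obstacle is expected beyond this bookkeeping. The restriction $1 \leq p \leq 2$ in the corollary is precisely what makes the Schatten interpolation step work with the factor $d^{t(1/p-1/2)} \leq d^{t/2}$, and the constraint $\epsilon \leq d^{t/2}$ is precisely what ensures the rescaled tolerance $\alpha$ lies in the admissible regime $(0,1]$ of Theorem \ref{theorem:t-design}. The only subtlety worth flagging is that Lemma \ref{lemma:bound} is not tight in general (as noted in the discussion following it), so this chain of inequalities may be lossy, but the slack is harmless for the qualitative conclusion that $\Psi^{(\boldsymbol{W},\gamma)}$ is typically an $\epsilon$-twirling channel.
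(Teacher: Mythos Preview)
Your proposal is correct and follows essentially the same approach as the paper: set $\alpha = \epsilon d^{-t/2}$ to invoke Theorem \ref{theorem:t-design}, then pass to the $2$-norm over states via Lemma \ref{lemma:bound} and the identification $\|\Theta\|_{2\to2}=\|\widehat\Theta\|_\infty$, and finally use the H\"older/interpolation inequality $\|X\|_p \le d^{t(1/p-1/2)}\|X\|_2$ to obtain the $p$-norm bound for $1\le p\le 2$. The paper carries out exactly these three steps in the same order.
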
 

\begin{proof}
First, setting $\alpha = \epsilon d^{-t/2} \leq 1$ in Theorem \ref{theorem:t-design} implies 
\eq{
\pnorm{\infty}{\widehat \Psi^{(\boldsymbol{W}, \gamma)} -  \widehat\Omega^{(d, \gamma)}} < \epsilon d^{-\frac{t}{2}}
}
with probability more than $\displaystyle 1-  2 d^{t \left( 2- \frac{C}{6}\right)}$ if $\displaystyle k = \frac{Ct d^t \log d}{\epsilon^2}$. Using Lemma \ref{lemma:bound}, this bound also yields the following estimate on the $2$-norm.
\eq{
\max_{\rho \in S(d^t)}\pnorm{2}{\Psi^{(\boldsymbol{W}, \gamma)}(\rho) -  \Omega^{(d, \gamma)}(\rho)} 
< \epsilon d^{-\frac{t}{2}} \ .
}

Next, we apply the generalized H\"older's inequality. Namely, let $s_1, s_2, \ldots, s_{d^t}$ be the singular values of $\Psi^{(\boldsymbol{W}, \gamma)}(\rho) -  \Omega^{(d, \gamma)}(\rho)$, and then for $1\leq p < 2 \leq q$ such that $\frac{1}{p} = \frac{1}{2} + \frac{1}{q}$, we have, via the standard argument,
\eq{
\left( \sum_{i=1}^{d^t} (s_i \cdot 1)^p \right)^{\frac{1}{p}} 
&\leq \left( \sum_{i=1}^{d^t} s_i^2 \right)^{\frac{1}{2}} 
\left( \sum_{i=1}^{d^t}  1^q \right)^{\frac{1}{q}} \\
&= \left( \sum_{i=1}^{d^t} s_i^2 \right)^{\frac{1}{2}} \left( d^t \right)^{\frac{1}{p} - \frac{1}{2}} \ .
}
Thus, for $1 \leq p \leq 2$, we obtain
\eq{
\max_{\rho \in S(d^t)}\pnorm{p}{\Psi^{(\boldsymbol{W}, \gamma)}(\rho) -  \Omega^{(d, \gamma)}(\rho)} 
&< \epsilon d^{-\frac{t}{2}}  d^{t \left(\frac{1}{p} - \frac{1}{2} \right)} \\
&= \epsilon   d^{t \left(\frac{1}{p} - 1 \right)}  \ .
}
This completes the proof.
\end{proof}

Corollary \ref{corollary:t-twirling} states, in particular, that the random quantum channel
\eq{
\Psi^{(\boldsymbol{W}, (1^{\times t}))}(X) = \frac{1}{k} \sum_{i=1}^k U_i^{\otimes t} X (U_i^*)^{\otimes t} \ ,
}
which is the same as \eqref{eq:tensor-channel}, is typically $\epsilon$-twirling.

This random quantum channel and the case $\gamma = (1, -)$ were investigated for $1 \leq p \leq \infty$ in \cite{lancien2020weak}, where $k$ is required to grow at least on the order of $(td)^t(t \log d)^6/\epsilon^2$ and the tail bound is constant. 
This result is based on the preceding research for $t=1$ in \cite{aubrun2009almost}, where $k$ is required to grow at least on the order of $d(\log d)^6/\epsilon^2$ and the tail bound is constant. Since $\Omega^{(d,(1))}(X) = \trace[X] I/d$ for any $X \in M(d)$, the random quantum channel $\Psi^{(\boldsymbol{W}, (1))}$ was referred to as an $\epsilon$-randomizing channel in \cite{aubrun2009almost}.

The case $\gamma = (1^{\times t})$ yields another interesting consequence. Remember $\hat \Omega^{d, (1^{\times t})}$ is a projection, and consequently all its eigenvalues are $0$ or $1$. Hence, $\hat \Psi^{(\boldsymbol{W}, (1^{\times t}))}$ must have a similar eigenvalue distribution because it is close to the projection with respect to the operator norm. Here is the statement:
\begin{corollary}[$(1-\epsilon)$-expander]\label{cororally:expander}
The following random quantum channel: for $X \in M(d^t)$
\eq{
\Phi(X) = \frac{1}{k} \sum_{i=1}^k U_i^{\otimes t} X (U_i^*)^{\otimes t}  \ ,
}
where  $U_i$'s are independent (possibly different) unitary $t$-designs,
is a quantum $(1-\epsilon)$-expander with probability more than $1 - 2d^{t(2 - \frac{C}{6})}$ for $C>12$, $0 < \epsilon \leq 1$ and $\displaystyle k = \frac{C t\log d}{\epsilon^2}$. Moreover, $\epsilon$ can be chosen as small as $\omega( d^{-t} \sqrt{\log d})$. Here, $f(d)=\omega(g(d))$ means $g(d)/f(d) \to 0$ as $d \to \infty$.
\end{corollary}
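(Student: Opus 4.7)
The plan is to verify the two conditions of Definition~\ref{definition:quantum expander}, case~(2), by combining Theorem~\ref{theorem:t-design} with Lemma~\ref{lemma:spectral gap}. First, I would instantiate Theorem~\ref{theorem:t-design} with $\gamma = (1^{\times t})$ and $\alpha = \epsilon$: the hypothesis $0 < \alpha \leq 1$ is guaranteed by $0 < \epsilon \leq 1$, and $k = Ct \log d / \epsilon^2$ coincides with the corollary's choice. This delivers, with probability at least $1 - 2 d^{t(2 - C/6)}$, the operator-norm bound
$$
\pnorm{\infty}{\widehat\Phi - \widehat\Omega^{(d,(1^{\times t}))}} < \epsilon.
$$
Conditioning on this event, I would then apply part~(2) of Lemma~\ref{lemma:spectral gap} with $\delta = \epsilon$: since $\Phi$ is exactly of the tensor form \eqref{eq:tensor-channel}, the lemma immediately yields $|\lambda_r(\widehat\Phi)| = 1$ and $|\lambda_{r+1}(\widehat\Phi)| \leq \epsilon$, which is precisely the spectral-gap condition required by Definition~\ref{definition:quantum expander}, case~(2).

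Next, I would verify the remaining condition $k(d)/d^{2t} \to 0$. Substituting $k(d) = C t \log d / \epsilon^2$, this becomes $C t \log d / (\epsilon^2 d^{2t}) \to 0$, equivalently $\epsilon^2 d^{2t}/\log d \to \infty$, i.e.\ $\epsilon = \omega(d^{-t}\sqrt{\log d})$. This pins down the asymptotic lower bound on how small $\epsilon$ may be chosen, matching the rate stated at the end of the corollary. The channel $\Phi$ is manifestly unital, since $\Phi(I) = k^{-1}\sum_i U_i^{\otimes t} (U_i^*)^{\otimes t} = I$, so no extra considerations from case~(1) of Definition~\ref{definition:quantum expander} (such as entropy divergence of the fixed state) need to be addressed.

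The argument is essentially a direct assembly of ingredients already assembled in the paper: the operator-norm concentration from Theorem~\ref{theorem:t-design} is precisely the hypothesis Lemma~\ref{lemma:spectral gap} is built to convert into a spectral-gap bound, and the rest is dimensional bookkeeping for the admissible range of $\epsilon$. I do not foresee any substantive technical obstacle; the only mild subtlety is confirming that the tensor form of $\Phi$ places us squarely in case~(2) of Definition~\ref{definition:quantum expander}, so that the nontrivial multiplicity $r$ of the eigenvalue $1$ is treated correctly.
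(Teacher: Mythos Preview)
Your proposal is correct and mirrors the paper's own proof: set $\gamma=(1^{\times t})$, $\alpha=\epsilon$ in Theorem~\ref{theorem:t-design}, feed the resulting operator-norm bound into part~(2) of Lemma~\ref{lemma:spectral gap}, and then check that $k/d^{2t}\to 0$ forces $\epsilon=\omega(d^{-t}\sqrt{\log d})$. The only addition is your explicit remark that $\Phi$ is unital, which is harmless since case~(2) of Definition~\ref{definition:quantum expander} already handles the tensor form directly.
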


\begin{proof}
First, set $\gamma = (1^{\times t})$ and $\alpha = \epsilon$ in Theorem \ref{theorem:t-design}. Then, we get the following bound as before:
\eq{
\pnorm{\infty}{\widehat \Phi -  \widehat\Omega} < \epsilon 
}
where $\Omega = \Omega^{(d, (1^{\times t}))}$. Then, using Lemma \ref{lemma:spectral gap}, the $(r+1)$-th largest eigenvalue of $\widehat \Phi$ is bounded by $\epsilon$. Clearly, $k/d^{2t} \to 0$ as $d \to \infty$ in this case. 

Next, set $\epsilon = \omega(d^{-t} \sqrt{\log d})$ so that 
\eq{
\frac{k}{d^{2t}} = \frac{Ct \log d}{\omega(\log d)} \to 0 
}
as $d \to \infty$. 
\end{proof}

By Proposition \ref{proposition:spectral radius}, we know that the largest eigenvalue of $\widehat \Phi$ is $1$. Moreover, its multiplicity is $r = \mathrm{rank}(\widehat \Omega)$; $r = t!$ if $d \geq t$. Indeed, the support of the projection $\widehat \Omega$ is spanned by $\{\hat P_\tau\}_{\tau\in S_t}$, and $\widehat \Phi \hat P_\tau = \hat P_\tau$ for all $\tau \in S_t$. Thus, Corollary \ref{cororally:expander} implies that, typically, all non-unit eigenvalues of $\widehat \Phi$ have modulus less than $\epsilon$.

In \cite{hastings2007random}, \cite{hastings2009classical} and \cite{lancien2024optimal}, the regime $\epsilon \approx \displaystyle \frac{2\sqrt{k-1}}{k}$ was investigated. Unfortunately, our results do not cover this regime. Indeed, setting $\epsilon$ to be proportional to $1/\sqrt{k}$ in the statement of Corollary \ref{cororally:expander} does not yield useful results. 

When $\gamma \neq (1^{\times t})$, the averaged matrix $\hat \Omega^{d, \gamma}$ is more involved. When $t$ is large but not too large, one can use computer programs, like \cite{fukuda2019rtni} to analyze it as a sum of tensor networks.

\begin{remark}\label{remark:uneven}
Theorem \ref{theorem:t-design}, Corollary \ref{corollary:t-twirling} and Corollary \ref{cororally:expander} can be generalized further. 
Consider the following case of uneven weights:
\eq{
\hat \Psi(X) = \sum_{i=1}^k c_i U_i^{\otimes \gamma} \otimes \overline{U_i^{\otimes \gamma}} \ , 
}
where $\sum_{i=1}^k c_i = 1$ with $0 \leq c_i \leq L/k$ for some universal constant $L$. 
When $\gamma = (1^{\times t})$, this random channel is typically an $\epsilon$-randomizing channel and a $(1-\epsilon)$-quantum expander for larger $k$. In fact, proofs for this case can be obtained by modifying the bounds of $M$ and $V$ in the above proofs. Moreover, if $c_i$'s are independently random in the same range and satisfy $\E[\sum_{i=1}^k c_i] = 1$, the above map $\Psi$ is not a quantum channel in general, but it will asymptotically yield similar results. Such a generalization for $t=1$ is investigated in the next subsection; see Assumption \ref{assumption:qe}.
\end{remark}

\subsection{New random models}\label{sec:new models}
In this subsection, we construct new random quantum channels which are, with high probability,
well-defined, and $\epsilon$-randomizing quantum channels (Definition \ref{definition:almost}) or quantum expanders (Definition \ref{definition:quantum expander}).  It is noteworthy that since our methods are based on Bernstein's inequality, we impose only second moment conditions on individual bounded random Kraus operators, i.e. they are just bounded and isotropic. 

\begin{assumption}[Distributions of independent Kraus operators]\label{assumption:qe}
Let $\{A_i\}_{i=1}^k \subseteq  M(d)$ be independent random matrices satisfying the following conditions. First, for each $A_i$
\eq{\label{eq:moment conditions}
\E [(A_i)_{x,y} (\bar A_i)_{z, w}] = \frac{1}{d}\delta(x,z) \delta(y,w) \ . 
}
In particular, it holds that $\E [A_i^*A_i] = I = \E [A_iA_i^*]$.
Next, there is a constant $L \geq 1$ such that for every $A_i$
\eq{
\pnorm{\infty}{A_i}\leq L \ .
}
We denote this joint distribution by $\boldsymbol{A}$. 
Note that the moment condition implies
\eq{
1 = \pnorm{\infty}{I} = \pnorm{\infty}{\E [A_iA_i^*]}  
\leq \E \pnorm{\infty}{A_iA_i^*} \leq L^2 \ .
}
\end{assumption}
Obviously, $A_i = U_i$ satisfy the above assumptions when $U_i$ are independently random with respect to the Haar measure or just a unitary $1$-design. Moreover, $A_i = (U_i + U_i^*)/\sqrt{2}$ yield a non-unitary example. As noted in Remark \ref{remark:uneven}, Assumption \ref{assumption:qe} can be stated under weaker conditions, giving room for scaling, but we retain the current form for simplicity. 

Define random CP super-operators: 
\eq{\label{eq:generalized mixed unitary}
\Xi^{(\boldsymbol{A})}(X) = \frac{1}{k} \sum_{i=1}^k A_i X  A_i^*  
}
where $\{A_i\}_{i=1}^k \subseteq M(d)$ satisfy Assumption \ref{assumption:qe}. 

\begin{theorem}\label{theorem:generalized cp}
The random CP super-operator $\Xi^{(\boldsymbol{A})}$ in \eqref{eq:generalized mixed unitary} concentrates around the average $\Omega = \Omega^{(1)}$ in \eqref{eq:twirling the average}. 
Set $\tilde C = 4L^4 + \frac{4}{3}L^2 + \frac{16}{3}$. 
Then, for $0 <\alpha \leq 1$ and $\displaystyle k = \frac{C\log d}{\alpha^2}$ with  $C > \tilde C$, we have 
\eq{
\Pr \left\{
\pnorm{\infty}{\widehat \Xi^{(\boldsymbol{A})} - \widehat \Omega} \geq \alpha \right\}
\leq  2 d^{2(1- C/\tilde C)} \ .
}
\end{theorem}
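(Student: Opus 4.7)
The plan is to invoke Proposition \ref{proposition:bernsetin kraus} with the normalized Kraus operators $B_i = A_i/\sqrt{k}$, so that $\widehat{\Xi^{(\boldsymbol A)}} = \sum_{i=1}^k B_i \otimes \bar B_i$. The first step is to identify the mean: entry by entry, the moment condition \eqref{eq:moment conditions} gives $\E\bigl[(A_i\otimes \bar A_i)\bigr]_{(x,z),(y,w)} = \tfrac{1}{d}\delta(x,z)\delta(y,w)$, which are exactly the entries of $\widehat\Omega$, the matrix representation of $X\mapsto\trace(X)\,I/d$. In particular $\widehat\Omega$ is a rank-one projection of unit operator norm, and $\E[\widehat{\Xi^{(\boldsymbol A)}}] = \widehat\Omega$, so Proposition \ref{proposition:bernsetin kraus} applies directly to the centered summands $X_i = (A_i\otimes\bar A_i - \widehat\Omega)/k$.

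For the uniform bound $M$, the triangle inequality together with $\pnorm{\infty}{A_i} \leq L$ yields $\pnorm{\infty}{A_i \otimes \bar A_i - \widehat\Omega} \leq L^2 + 1$, so $M \leq (L^2+1)/k$. For the variance proxy $V$, the cross term collapses thanks to idempotency of the projection: $\E[A_i\otimes\bar A_i]\,\E[A_i^*\otimes A_i^T] = \widehat\Omega\,\widehat\Omega = \widehat\Omega$, whose operator norm is $1$. The primary term is bounded by Jensen's inequality and the uniform spectral bound, $\pnorm{\infty}{\E[A_iA_i^*\otimes\overline{A_iA_i^*}]} \leq \E\,\pnorm{\infty}{A_i}^4 \leq L^4$, and the symmetric estimate holds when $A_iA_i^*$ is replaced by $A_i^*A_i$. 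Summing $k$ identical contributions of size $(L^4+1)/k^2$ produces $V \leq (L^4+1)/k$.

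Plugging these into \eqref{eq:bernstein kraus} with $D = 2d^2$ and using $\alpha \leq 1$ and $L \geq 1$, a short computation gives $k \cdot 2\bigl(V + M\alpha/3\bigr) \leq 2(L^4+1) + \tfrac{2}{3}(L^2+1) = \tilde C/2$, so for $k = C\log d/\alpha^2$ the exponent satisfies $\alpha^2/\bigl[2(V + M\alpha/3)\bigr] \geq 2C\log d/\tilde C$, which yields the stated probability bound $2d^{2(1-C/\tilde C)}$. The main (and only) bookkeeping hurdle is calibrating the constants $L^4+1$ and $L^2+1$ so that they combine exactly into $\tilde C = 4L^4 + \tfrac{4}{3}L^2 + \tfrac{16}{3}$; conceptually, once the second-moment hypothesis \eqref{eq:moment conditions} is recognized as the unitary $1$-design identity for $\E[A_i\otimes\bar A_i]$, the rest is a direct instantiation of the matrix Bernstein inequality.
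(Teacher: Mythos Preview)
Your proof is correct and follows essentially the same approach as the paper: identify $\E[A_i\otimes\bar A_i]=\widehat\Omega$ from the isotropy condition, apply Proposition~\ref{proposition:bernsetin kraus} with $A_i/\sqrt{k}$, and bound $M\leq(L^2+1)/k$ and $V\leq(L^4+1)/k$ to obtain $2(V+M\alpha/3)\leq\tilde C/(2k)$. Your explicit use of the idempotency $\widehat\Omega^2=\widehat\Omega$ to handle the cross term is a nice clarification of a step the paper leaves implicit.
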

\begin{proof}
Since $\boldsymbol{A}$ satisfies the isotropic condition in \eqref{eq:moment conditions}, for $X \in M(d)$ we have
\eq{
\E[A X A^*] 
&= \sum_{i, j, k, \ell = 1}^d | i \rangle \langle j | \,  \E[a_{i,k}\,x_{k,\ell}\,\bar a_{j, \ell}] \\
&= \sum_{i, k=1}^d | i \rangle \langle i |  \frac{x_{k, k}}{d}= \trace[X] I/d = \Omega(X) \ .
}
The rest of the proof proceeds like in the proof of Theorem \ref{theorem:t-design}. 
Apply Proposition \ref{proposition:bernsetin kraus}, replacing $A_i$ in the statement by $A_i/\sqrt{k}$. Then, the relevant values are calculated and bounded as
\eq{
D = d^2+d^2, \qquad M \leq \frac{L^2+1}{k} , \qquad V \leq \frac{L^4+1}{k} \ .
}
Then, 
\eq{
2 \left(V+ \frac{M\alpha}{3} \right) 
\leq 2 \left(\frac{L^4+1}{k} + \frac{(L^2 +1)\alpha}{3k} \right)
\leq \frac{\tilde C}{2k} \ .
}
Hence, the bound in \eqref{eq:bernstein kraus} is upper-bounded by 
\eq{
2\exp \left( 
2 \log d - \frac{2k\alpha^2}{\tilde C}
\right)
= 2 d^{2(1- C/\tilde C)}
}
if $k$ and $C$ are chosen as in the statement of the theorem. 
\end{proof}

\begin{lemma}\label{lemma:almost invertible}
Suppose $\{A_i\}_{i=1}^k \subseteq  M(d)$ satisfy Assumption \ref{assumption:qe}. Set $\tilde C = 2L^4 + \frac{2}{3}L^2 + \frac{8}{3}$. 
Then, for $0 <\alpha \leq 1$ and $\displaystyle k = \frac{C\log d}{\alpha^2}$ with  $C > \tilde C$, we have 
\eq{
\Pr \left\{
\pnorm{\infty}{\frac{1}{k}\sum_{i=1}^k A_i^*A_i - I}  \geq \alpha \right\}
\leq 2 d^{ 1- C / \tilde C} \ .
}
\end{lemma}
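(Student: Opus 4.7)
The plan is to apply the non-Hermitian matrix Bernstein inequality of Proposition~\ref{proposition:bernstein} directly to the centered Hermitian matrices
\begin{equation*}
X_i = \frac{1}{k}\bigl(A_i^* A_i - I\bigr) \in M(d),
\end{equation*}
which are independent and have mean zero by the isotropy hypothesis $\mathbb{E}[A_i^*A_i]=I$ from Assumption~\ref{assumption:qe}. Since each $X_i$ is $d \times d$, the dimension factor in the bound is $D = d + d = 2d$, which produces the prefactor $2d$ that is consistent with the target probability $2 d^{1 - C/\tilde C}$.

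The next step is to estimate $M$ and $V$. For the uniform operator-norm bound, the essential-supremum bound $\|A_i\|_\infty \leq L$ gives
\begin{equation*}
\|A_i^* A_i - I\|_\infty \leq \|A_i\|_\infty^2 + 1 \leq L^2 + 1,
\end{equation*}
so $M \leq (L^2+1)/k$. For the variance, since each $X_i$ is Hermitian, $X_iX_i^* = X_i^*X_i = X_i^2$, so I only need to bound $\bigl\|\sum_i \mathbb{E}[X_i^2]\bigr\|_\infty$. Using the triangle inequality,
\begin{equation*}
\bigl\|\mathbb{E}\bigl[(A_i^*A_i - I)^2\bigr]\bigr\|_\infty
\leq \bigl\|\mathbb{E}[(A_i^*A_i)^2]\bigr\|_\infty + 1
\leq L^4 + 1,
\end{equation*}
where the last bound comes from $\|\mathbb{E}[(A_i^*A_i)^2]\|_\infty \leq \mathbb{E}\|A_i^*A_i\|_\infty^2 \leq L^4$ by Jensen's inequality and the operator-norm bound. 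Summing over $i$ yields $V \leq (L^4+1)/k$.

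Finally I combine these bounds. With $\alpha \leq 1$,
\begin{equation*}
2\!\left(V + \frac{M\alpha}{3}\right)
\leq \frac{2(L^4+1)}{k} + \frac{2(L^2+1)}{3k}
= \frac{1}{k}\!\left(2L^4 + \frac{2}{3}L^2 + \frac{8}{3}\right) = \frac{\tilde C}{k},
\end{equation*}
matching the definition of $\tilde C$ given in the statement. Substituting into \eqref{eq:bernstein bound} gives a tail bound of
\begin{equation*}
2d \exp\!\left(-\frac{\alpha^2 k}{\tilde C}\right),
\end{equation*}
and setting $k = C \log d / \alpha^2$ with $C > \tilde C$ produces the advertised bound $2 d^{\,1 - C/\tilde C}$.

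There is no real obstacle here; the proof is a direct bookkeeping exercise that mirrors the proof of Theorem~\ref{theorem:t-design} but for the simpler Hermitian sum $\sum_i A_i^*A_i$ rather than the tensor-square super-operator. The only point to be careful about is tracking the constants so that the final form $\tilde C = 2L^4 + \tfrac{2}{3}L^2 + \tfrac{8}{3}$ comes out correctly, which forces me to use $\alpha \leq 1$ to fold the $M\alpha/3$ term cleanly into the additive constant.
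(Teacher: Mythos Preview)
Your proposal is correct and follows essentially the same approach as the paper: apply Proposition~\ref{proposition:bernstein} to the centered Hermitian matrices $X_i=(A_i^*A_i-I)/k$, bound $D=2d$, $M\leq (L^2+1)/k$, and $V\leq (L^4+1)/k$, then combine to get $2(V+M\alpha/3)\leq \tilde C/k$ and substitute $k=C\log d/\alpha^2$. The only cosmetic difference is that the paper writes the variance bound directly as $\|\mathbb{E}[A_i^*A_iA_i^*A_i]-I\|_\infty\leq L^4+1$ without spelling out the intermediate Jensen step.
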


\begin{proof}
Set random Hermitian matrices $X_i = (A_i^*A_i - I)/k$ and apply Proposition \ref{proposition:bernstein}. Indeed, calculate the relevant values: 
\eq{
& D = 2d, \qquad M \leq \frac{L^2 + 1}{k}  , \qquad  \\
& V \leq \frac{1}{k^2} \sum_{i=1}^k \pnorm{\infty}{
\E \left[A_i^* A_i A_i^* A_i\right] - I 
} \leq \frac{L^4+1}{k} \ .
}
Then, the proof works similarly as in the proof of Theorem \ref{theorem:generalized cp}. Since 
\eq{
2 \left(V+ \frac{M\alpha}{3} \right) 
\leq 2 \left(\frac{L^4+1}{k} + \frac{(L^2 +1)\alpha}{3k} \right)
\leq \frac{\tilde C}{k} \ .
}
and the bound in \eqref{eq:bernstein bound} is uupper-bounded by
\eq{
2\exp \left( 
\log d - \frac{k\alpha^2}{\tilde C}
\right)
= 2 d^{1- C/\tilde C} \ .
}
This complets the proof. 
\end{proof}

If the Hermitian matrix $A = \frac{1}{k}\sum_{i=1}^k A_i^*A_i$ is invertible, then, like in \cite{fukuda2022additivity}, one can rectify $\Xi^{(\boldsymbol{A})}$ to obtain a quantum channel:
\eq{\label{eq:new model}
\Psi^{(\boldsymbol{A})} (\rho) = \frac{1}{k}\sum_{i=1}^k B_i \rho B_i^* \ .
}
Here, $B_i = A_i \sqrt{A^{-1}}$ so that 
\eq{
\frac{1}{k}\sum_{i=1}^k B_i^*B_i = \sqrt{A^{-1}} \left[\frac{1}{k} \sum_{i=1}^k A_i^*A_i \right]\sqrt{A^{-1}}
= I
}
showing that $\Psi^{(\boldsymbol{A})}$ is trace-preserving. 

\begin{theorem}[Typically well-defined and noisy]\label{theorem:new model bound}
Set $\tilde C = 4L^4 + \frac{4}{3}L^2 + \frac{16}{3}$. Then, for $0 <\alpha \leq 1$ and $\displaystyle k = \frac{16 C\log d}{\alpha^2}$ with $C > \tilde C$, the random quantum channel $\Psi^{(\boldsymbol{A})}$ in \eqref{eq:new model} satisfies:
\eq{
&\Pr \left\{\text{`` $\Psi^{(\boldsymbol{A})}$ is not well-defined.'' or }
\pnorm{\infty}{\widehat \Psi^{(\boldsymbol{A})} -  \widehat \Omega } \geq \alpha \right\}\\
&\leq 2\left(1+ d^{-1}\right) d^{2(1- C/\tilde C)} \ . 
}
\end{theorem}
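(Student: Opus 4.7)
The plan is to handle both tasks—well-definedness of $\Psi^{(\boldsymbol A)}$ and the concentration bound—by conditioning on two high-probability good events built from results already at hand. Specifically, I would set $E_1 = \{\pnorm{\infty}{\widehat\Xi^{(\boldsymbol A)} - \widehat\Omega} < \alpha/4\}$ and $E_2 = \{\pnorm{\infty}{\frac{1}{k}\sum_{i=1}^k A_i^*A_i - I} < \alpha/4\}$. Since $k = 16C\log d/\alpha^2 = C\log d/(\alpha/4)^2$, both Theorem \ref{theorem:generalized cp} and Lemma \ref{lemma:almost invertible} apply with parameter $\alpha/4$ and the same ratio $C$. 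The $\tilde C$ appearing in Lemma \ref{lemma:almost invertible} is exactly half the $\tilde C$ appearing in Theorem \ref{theorem:generalized cp}, so the two resulting tail bounds are $2d^{2(1-C/\tilde C)}$ and $2d^{1-2C/\tilde C}$ respectively. A union bound collapses their sum to $2(1+d^{-1})d^{2(1-C/\tilde C)}$, which is precisely the advertised tail.

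On $E_1\cap E_2$, the bound $\pnorm{\infty}{A-I} < 1$ implies that $A = \frac{1}{k}\sum_i A_i^*A_i$ is invertible, so $\Psi^{(\boldsymbol A)}$ is well-defined. To upgrade the bound on $\widehat\Xi^{(\boldsymbol A)} - \widehat\Omega$ to one on $\widehat\Psi^{(\boldsymbol A)} - \widehat\Omega$, I would use the factorization $\widehat\Psi^{(\boldsymbol A)} = \widehat\Xi^{(\boldsymbol A)}\cdot M$ with $M = \sqrt{A^{-1}}\otimes\overline{\sqrt{A^{-1}}}$, which follows at once from $B_i = A_i\sqrt{A^{-1}}$. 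Adding and subtracting $\widehat\Omega\cdot M$ and applying the triangle inequality gives $\pnorm{\infty}{\widehat\Psi^{(\boldsymbol A)} - \widehat\Omega} \leq \pnorm{\infty}{\widehat\Xi^{(\boldsymbol A)} - \widehat\Omega}\pnorm{\infty}{M} + \pnorm{\infty}{\widehat\Omega}\pnorm{\infty}{M-I}$, where $\pnorm{\infty}{\widehat\Omega} = 1$ since $\widehat\Omega = \frac{1}{d}|\widehat I\rangle\langle\widehat I|$ is a rank-one projection.

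The one step requiring a little care is the spectral analysis of $M$. Because $M$ is the tensor product of a positive Hermitian matrix with its entry-wise conjugate, its eigenvalues are exactly $1/\sqrt{\mu_i\mu_j}$ as $\mu_i,\mu_j$ range over the eigenvalues of $A$; on $E_2$ these all lie in $[1/(1+\alpha/4),1/(1-\alpha/4)]$. Hence $\pnorm{\infty}{M} \leq 1/(1-\alpha/4)$ and $\pnorm{\infty}{M-I} \leq (\alpha/4)/(1-\alpha/4)$, which gives $\pnorm{\infty}{\widehat\Psi^{(\boldsymbol A)} - \widehat\Omega} \leq (\alpha/2)/(1-\alpha/4) \leq 2\alpha/3 < \alpha$ for $\alpha\leq 1$, closing the argument.

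The main obstacle—if one wants to call it that—is this spectral control of $M$: a cruder operator-Lipschitz bound for the square root would lose constants and force a factor strictly larger than $16$ in front of $C\log d/\alpha^2$. Reading the spectrum of $M$ directly off that of $A$ avoids this loss. Everything else is routine bookkeeping, engineered so that the factor-of-two relation between the two $\tilde C$ constants produces the clean $(1+d^{-1})$ prefactor in the final bound.
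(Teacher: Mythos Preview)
Your proposal is correct and follows essentially the same approach as the paper: define the two good events via Theorem~\ref{theorem:generalized cp} and Lemma~\ref{lemma:almost invertible} with parameter $\alpha/4$, use the factorization $\widehat\Psi^{(\boldsymbol A)} = \widehat\Xi^{(\boldsymbol A)}\bigl(\sqrt{A^{-1}}\otimes\overline{\sqrt{A^{-1}}}\bigr)$, apply the add-and-subtract triangle inequality, and combine via the union bound to produce the $(1+d^{-1})$ prefactor. The only difference is that your spectral bounds on $M$ are a bit sharper (yielding $\pnorm{\infty}{\widehat\Psi^{(\boldsymbol A)}-\widehat\Omega}<2\alpha/3$), whereas the paper uses the coarser estimates $\pnorm{\infty}{M}\leq 2$ and $\pnorm{\infty}{M-I}\leq 2(\alpha/4)$, which still just close at $\alpha$; this is a cosmetic refinement, not a different argument.
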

\begin{proof}
First let $A = \frac{1}{k}\sum_{i=1}^k A_i^*A_i$ and if $\pnorm{\infty}{A - I} < \alpha$ for $0<\alpha \leq 1/2$, then $A^{-1}$ exists and $0 \leq (1+\alpha)^{-1} \leq A^{-1} \leq (1-\alpha)^{-1}$, which implies
\eq{
&\pnorm{\infty}{\sqrt{A^{-1}} \otimes \overline{\sqrt{A^{-1}}}} \leq 2 \ ,
\qquad \text{and} \qquad  \\
&\pnorm{\infty}{\sqrt{A^{-1}} \otimes \overline{\sqrt{A^{-1}}} - I} \leq \frac{\alpha}{1 \pm \alpha} \leq 2\alpha \ .
}
Note that $\sqrt{A^{-1}}$ and hence $\sqrt{A^{-1}} \otimes \overline{\sqrt{A^{-1}}}$ are Hermitian.
Next, since 
\eq{
\widehat \Psi^{(\boldsymbol{A})} = \widehat \Xi^{(\boldsymbol{A})} \left(\sqrt{A^{-1}} \otimes \overline{\sqrt{A^{-1}}}\right)
}
we have
\eq{
&\pnorm{\infty}{\widehat \Psi^{(\boldsymbol{A})} - \widehat \Omega} \\
&\leq \pnorm{\infty}{\widehat \Xi^{(\boldsymbol{A})} \left(\sqrt{A^{-1}} \otimes \overline{\sqrt{A^{-1}}}\right) - \widehat \Omega \left(\sqrt{A^{-1}} \otimes \overline{\sqrt{A^{-1}}}\right)} \\
& \qquad + \pnorm{\infty}{\widehat \Omega \left(\sqrt{A^{-1}} \otimes \overline{\sqrt{A^{-1}}}\right) - \widehat \Omega} \\
& \leq  \pnorm{\infty}{\widehat \Xi^{(\boldsymbol{A})}  - \widehat \Omega } \pnorm{\infty}{\sqrt{A^{-1}} \otimes \overline{\sqrt{A^{-1}}}} \\
& \qquad + \pnorm{\infty}{\widehat \Omega} \pnorm{\infty}{\sqrt{A^{-1}} \otimes \overline{\sqrt{A^{-1}}} - I}\\
& \leq 2  \pnorm{\infty}{\widehat \Xi^{(\boldsymbol{A})}  - \widehat \Omega } 
+ 2\alpha \ . 
}
Therefore, replace $\alpha$ by $\alpha/4$ in Theorem \ref{theorem:generalized cp} and Lemma \ref{lemma:almost invertible} and apply union bound to prove our claim. Namely, the tail bound is obtained by
\eq{
2 d^{2(1-C/\tilde C)} + 2 d^{1-2C/\tilde C} = 2(1+d^{-1}) d^{2(1-C/\tilde C)}.
}
Here, $\tilde C$ is that in Theorem \ref{theorem:generalized cp}, which is twice as large as that in Lemma \ref{lemma:almost invertible}.
\end{proof}

\begin{corollary}
[Three regimes]\label{corollary:three regimes}
For the random quantum channel $\Psi^{(\boldsymbol{A})}$ in \eqref{eq:new model},
let $\lambda_2$ be the second largest eigenvalue of $\widehat \Psi^{(\boldsymbol{A})}$ in modulus. Then, the following individual statements hold for well-defined $\Psi^{(\boldsymbol{A})}$ with probability more than $1-2(1+d^{-1}) d^{2(1- C/\tilde C)}$ where $\tilde C = 4L^4 + \frac{4}{3}L^2 + \frac{16}{3} < C$. 
\begin{enumerate}
\item $\displaystyle k = \frac{64Cd \log d}{\epsilon^2}$ with $0< \epsilon < 1$: $\Psi^{(\boldsymbol{A})}$ is a quantum expander such that
\eq{
|\lambda_2| \leq \frac{\epsilon}{\sqrt{d}} \ ,
} and it is also an $\epsilon$-randomizing channel; $\epsilon/2$-randomizing, in fact. 
\item $\displaystyle k =\frac{16Cd}{(1-\Delta)}$ with $0<\Delta<1$: $\Psi^{(\boldsymbol{A})}$ is a quantum expander such that
\eq{
|\lambda_2| \leq 2 \sqrt{\frac{(1-\Delta)\log d}{d}} \ .
}
\item $\displaystyle k =\frac{64C\log d}{\epsilon^2}$ with $0< \epsilon \leq 1$: $\Psi^{(\boldsymbol{A})}$ has a spectral gap such that $|\lambda_2| \leq \epsilon$.
\end{enumerate}
For quantum expanders the conditions in (1) of Definition \ref{definition:quantum expander} are used. 
\end{corollary}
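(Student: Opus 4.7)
The plan is to apply Theorem \ref{theorem:new model bound} once per regime with an appropriately chosen $\alpha$, then translate the resulting operator-norm bound into a spectral-gap bound via Lemma \ref{lemma:spectral gap}, and separately verify the remaining quantum-expander conditions of Definition \ref{definition:quantum expander}(1) in regimes (1) and (2). Since each regime instantiates the same concentration event, the announced tail probability $1 - 2(1+d^{-1})d^{2(1-C/\tilde C)}$ is inherited directly, with no union bound across regimes.

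Matching each prescribed $k$ against the relation $k = 16C\log d/\alpha^2$ from Theorem \ref{theorem:new model bound} fixes $\alpha = \epsilon/(2\sqrt d)$, $\alpha = \sqrt{(1-\Delta)\log d/d}$, and $\alpha = \epsilon/2$ in the three regimes respectively. On the good event $\{\pnorm{\infty}{\widehat\Psi^{(\boldsymbol{A})} - \widehat\Omega} < \alpha\}$, Lemma \ref{lemma:spectral gap}(1) with $t=1$ gives $|\lambda_2| \leq \alpha(1+\alpha) \leq 2\alpha$ (using $\alpha \leq 1$ in each case), which produces precisely the three stated bounds. The $\epsilon/2$-randomizing claim in regime (1) follows from the H\"older step used in the proof of Corollary \ref{corollary:t-twirling} with $t=p=1$, namely
\[
\max_{\rho \in S(d)} \pnorm{1}{(\Psi^{(\boldsymbol{A})} - \Omega)(\rho)} \leq \sqrt d\, \pnorm{2 \to 2}{\Psi^{(\boldsymbol{A})} - \Omega} \leq \sqrt d \cdot \alpha = \epsilon/2.
\]

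The real content lies in verifying condition (iii) of Definition \ref{definition:quantum expander}(1) in regimes (1) and (2), because $\Psi^{(\boldsymbol{A})}$ need not be unital. Uniqueness of a fixed quantum state $\rho^*$ is supplied by Lemma \ref{lemma:fixed point} since $\pnorm{2 \to 2}{\Psi^{(\boldsymbol{A})} - \Omega} < 1$. For divergence of the von Neumann entropy of $\rho^*$, I exploit $\Omega(\rho^*) = I/d$ together with $\Psi^{(\boldsymbol{A})}(\rho^*) = \rho^*$ to obtain
\[
\pnorm{2}{\rho^* - I/d} = \pnorm{2}{(\Psi^{(\boldsymbol{A})} - \Omega)(\rho^*)} \leq \alpha\, \pnorm{2}{\rho^*} \leq \alpha,
\]
expand to $\trace[(\rho^*)^2] \leq \alpha^2 + 1/d$, and apply the R\'enyi 2-entropy bound $H(\rho^*) \geq -\log \trace[(\rho^*)^2]$. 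This yields $\log d - O(1)$ in regime (1) and $\log d - \log\log d - O(1)$ in regime (2), both diverging as $d \to \infty$. Condition (i), $k/d^2 \to 0$, follows by direct inspection of the prescribed $k$'s in each regime. Regime (3) requires no expander verification, only the spectral-gap computation already done.

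I expect the main obstacle to be this entropy-divergence step, since the fixed state $\rho^*$ is only controlled in the Frobenius norm and one must push through the R\'enyi 2-entropy inequality to obtain information about $H(\rho^*)$. Beyond that, all remaining work is bookkeeping, ensuring that the factors of $\sqrt d$ and $\alpha(1+\alpha)$ arising downstream are exactly absorbed into the factor-of-four insertions built into the prescribed $k$'s.
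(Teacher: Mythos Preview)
Your proposal is correct and follows essentially the same architecture as the paper's proof: fix $\alpha$ in each regime by matching $k = 16C\log d/\alpha^2$, invoke Theorem \ref{theorem:new model bound} for the operator-norm concentration, pass to spectral gaps via Lemma \ref{lemma:spectral gap}(1), and then check Definition \ref{definition:quantum expander}(1)(i) and (iii) separately in regimes (1) and (2). The only noteworthy difference is in the entropy-divergence step: you bound $H(\rho^*)$ via the R\'enyi-2 inequality $H(\rho^*) \geq -\log \trace[(\rho^*)^2]$ applied to the fixed state, whereas the paper uses the sharper quadratic bound $H(\sigma) \geq \log d - d\,\pnorm{2}{\sigma - I/d}^2$ applied uniformly to every output $\sigma = \Psi^{(\boldsymbol A)}(\rho)$, obtaining the cleaner $H \geq \Delta \log d$ in regime (2). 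Both inequalities are standard and both suffice for divergence, so this is a cosmetic variation rather than a different route.
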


\begin{proof}
We apply Theorem \ref{theorem:new model bound}.
First, let $\alpha = \epsilon/2$ and $\displaystyle k =\frac{64C\log d}{\epsilon^2}$. Then, typically 
$\Psi^{(\boldsymbol{A})}$ is well-defined and 
\eq{
\pnorm{\infty}{\widehat \Psi^{(\boldsymbol{A})} -  \widehat \Omega } < \frac{\epsilon}{2} \ .
}
Hence, Lemma \ref{lemma:spectral gap} shows the last claim. Note that this condition is the most loose among the three, so $\Psi^{(\boldsymbol{A})}$ is also typically well-defined in the other two regimes. 

Next, let $\displaystyle \alpha = \sqrt{\frac{(1-\Delta)\log d}{d}}$ and then $\displaystyle k =\frac{16Cd}{(1-\Delta)}$. Hence, typically 
\eq{
&\pnorm{\infty}{\widehat \Psi^{(\boldsymbol{A})} -  \widehat \Omega } < \sqrt{\frac{(1-\Delta)\log d}{d}} \qquad \text{i.e.} \qquad  \\
&\max_{\rho \in S(d)}\pnorm{2}{\Psi^{(\boldsymbol{A})}(\rho) - I/d}^2 <  \frac{(1-\Delta)\log d}{d} \ .
}
Using a standard inequality we have
\eq{
\min_{\rho \in S(d)} H(\Psi^{(\boldsymbol{A})}(\rho)) 
&\geq \log d - d \max_{\rho \in S(d)} \pnorm{2}{\Psi^{(\boldsymbol{A})}(\rho) -  I/d}^2 \\
&\geq \Delta \log d \ . 
}
Here, we used that following fact: let $\{\lambda_i\}_{i=1}^d$  be the eigenvalues of a quantum state, then
\eq{
&\log d + \sum_{i=1}^d \lambda_i \log \lambda_i 
= \sum_{i=1}^d \lambda_i (\log d + \log \lambda_i )\\
&\leq \log \left( \sum_{i=1}^d \lambda_i (d \lambda_i) \right) 
\leq \left(\sum_{i=1}^d d \lambda_i^2 \right)- 1 \\
&= d \sum_{i=1}^d \left( \lambda_i - \frac{1}{d}\right)^2 \ ,
}
where we used the equality $\sum_{i=1}^d \lambda_i = 1$, Jensen's inequality, and the bound $\log x \leq x-1$ for $x>0$. 
Therefore, whatever the unique fixed point is (Lemma \ref{lemma:fixed point}), the entropy diverges as $d \to \infty$.
This satisfies the condition \ref{condition:noisy} for $\Psi^{(\boldsymbol{A})}$ to become a quantum expander. 
Other conditions \ref{condition:small env} and \ref{condition:gap} are clearly satisfied. Hence, second statement has been proved. 

Finally, let $\alpha = \epsilon / (2\sqrt{d})$ and then $k =64Cd\log d/\epsilon^2$, which shows the first statement in a similar way, completing the proof. 
\end{proof}

\section{Discussions}\label{sec:disc}
In this study, we used Schatten $\infty$-norm, i.e. operator norm, to measure the distance between two super-operators in matrix form. From this viewpoint, one can naturally use Bernstein's inequality for concentration phenomena. Moreover, this method makes it easier to evaluate singular values and eigenvalues of random quantum channels, in comparison with the average, applying Weyl's perturbation and majorant theorems. In this way, one can plainly discuss spectral gaps of quantum channels. However, the number of Kraus operators must grow at least proportional to the logarithm of the system dimension for quantum expanders because of the nature of Bernstein's inequality.

In Corollary \ref{corollary:three regimes}, we explored new random quantum channels which are almost randomizing channels or quantum expanders. To define this model, we imposed only two conditions on random Kraus operators (Assumption \ref{assumption:qe}), which are needed just for Bernstein's inequality to work. This is an advantage of this method, but then, of course, the tail bound cannot be exponentially small, unlike with the Haar-distributed unitary matrices. 

While applying Bernstein's inequality, our calculations for the bounds, namely $M$ and $V$ for example in Proposition \ref{proposition:bernsetin kraus}, were not tight because the loss is not very significant. Rather, it matters that the Schatten $\infty$-norm of Kraus operators is at the order of $k^{-1}$.  In this paper, random mixed unitary channels are defined with the equal weight $1/k$, but one can perturb them within the order of $k^{-1}$ without spoiling this paper's framework. 

Finally the use of Bernstein's inequality is advantageous in scaling models in terms of tensor product because it treats operator norm directly. Otherwise, one needs to make bounds corresponding to all inputs in the tensor-product input space, using $\epsilon$-nets or chaining arguments. 

\section*{Acknowledgments}
MF acknowledges JSPS KAKENHI Grant Number JP20K11667. MF appreciates the hospitality of the Centre International de Rencontres Math\'ematiques, where he received useful feedback during the workshop ``Bridges between Machine Learning and Quantum Information Science''.
MF appreciates useful comments by C\'ecilia Lancien on the first arxiv version.
MF thanks the anonymous referees of IEEE transactions on Information Theory.


\begin{thebibliography}{BASTS08}

\bibitem[Aub09]{aubrun2009almost}
Guillaume Aubrun.
\newblock On almost randomizing channels with a short kraus decomposition.
\newblock {\em Communications in mathematical physics}, 288(3):1103--1116, 2009.

\bibitem[AW02]{ahlswede2002strong}
Rudolf Ahlswede and Andreas Winter.
\newblock Strong converse for identification via quantum channels.
\newblock {\em IEEE Transactions on Information Theory}, 48(3):569--579, 2002.

\bibitem[BASTS08]{ben2008quantum}
Avraham Ben-Aroya, Oded Schwartz, and Amnon Ta-Shma.
\newblock Quantum expanders: Motivation and constructions.
\newblock In {\em 2008 23rd Annual IEEE Conference on Computational Complexity}, pages 292--303. IEEE, 2008.

\bibitem[BGGS22]{broadbent2022quantum}
Anne Broadbent, Carlos~E Gonz{\'a}lez-Guill{\'e}n, and Christine Schuknecht.
\newblock Quantum private broadcasting.
\newblock {\em Physical Review A}, 105(2):022606, 2022.

\bibitem[Bha13]{bhatia2013matrix}
Rajendra Bhatia.
\newblock {\em Matrix analysis}, volume 169.
\newblock Springer Science \& Business Media, 2013.

\bibitem[B{\.Z}17]{bengtsson2017geometry}
Ingemar Bengtsson and Karol {\.Z}yczkowski.
\newblock {\em Geometry of quantum states: an introduction to quantum entanglement}.
\newblock Cambridge university press, 2017.

\bibitem[C{\'S}06]{collins2006integration}
Beno{\^\i}t Collins and Piotr {\'S}niady.
\newblock Integration with respect to the haar measure on unitary, orthogonal and symplectic group.
\newblock {\em Communications in Mathematical Physics}, 264(3):773--795, 2006.

\bibitem[FHS22]{fukuda2022additivity}
Motohisa Fukuda, Takahiro Hasebe, and Shinya Sato.
\newblock Additivity violation of quantum channels via strong convergence to semi-circular and circular elements.
\newblock {\em Random Matrices: Theory and Applications}, 11(01):2250012, 2022.

\bibitem[FKN19]{fukuda2019rtni}
Motohisa Fukuda, Robert K{\"o}nig, and Ion Nechita.
\newblock Rtni—a symbolic integrator for haar-random tensor networks.
\newblock {\em Journal of Physics A: Mathematical and Theoretical}, 52(42):425303, 2019.

\bibitem[GGJN18]{gonzalez2018spectral}
Carlos~E Gonz{\'a}lez-Guill{\'e}n, Marius Junge, and Ion Nechita.
\newblock On the spectral gap of random quantum channels.
\newblock {\em arXiv preprint arXiv:1811.08847}, 2018.

\bibitem[Has07]{hastings2007random}
Matthew~B Hastings.
\newblock Random unitaries give quantum expanders.
\newblock {\em Physical Review A—Atomic, Molecular, and Optical Physics}, 76(3):032315, 2007.

\bibitem[HH09]{hastings2009classical}
MB~Hastings and AW~Harrow.
\newblock Classical and quantum tensor product expanders.
\newblock {\em Quantum Information \& Computation}, 9(3):336--360, 2009.

\bibitem[HLSW04]{hayden2004randomizing}
Patrick Hayden, Debbie Leung, Peter~W Shor, and Andreas Winter.
\newblock Randomizing quantum states: Constructions and applications.
\newblock {\em Communications in Mathematical Physics}, 250:371--391, 2004.

\bibitem[HLW06]{hoory2006expander}
Shlomo Hoory, Nathan Linial, and Avi Wigderson.
\newblock Expander graphs and their applications.
\newblock {\em Bulletin of the American Mathematical Society}, 43(4):439--561, 2006.

\bibitem[Lan24]{lancien2024optimal}
C{\'e}cilia Lancien.
\newblock Optimal quantum (tensor product) expanders from unitary designs.
\newblock {\em arXiv preprint arXiv:2409.17971}, 2024.

\bibitem[LM20]{lancien2020weak}
C{\'e}cilia Lancien and Christian Majenz.
\newblock Weak approximate unitary designs and applications to quantum encryption.
\newblock {\em Quantum}, 4:313, 2020.

\bibitem[LPG22]{lancien2022correlation}
C{\'e}cilia Lancien and David P{\'e}rez-Garc{\'\i}a.
\newblock Correlation length in random {MPS} and {PEPS}.
\newblock In {\em Annales Henri Poincar{\'e}}, volume~23, pages 141--222. Springer, 2022.

\bibitem[LY23]{lancien2023note}
C{\'e}cilia Lancien and Pierre Youssef.
\newblock A note on quantum expanders.
\newblock {\em arXiv preprint arXiv:2302.07772}, 2023.

\bibitem[Min17]{minsker2017some}
Stanislav Minsker.
\newblock On some extensions of bernstein’s inequality for self-adjoint operators.
\newblock {\em Statistics \& Probability Letters}, 127:111--119, 2017.

\bibitem[Oli09]{oliveira2009concentration}
Roberto~Imbuzeiro Oliveira.
\newblock Concentration of the adjacency matrix and of the laplacian in random graphs with independent edges.
\newblock {\em arXiv preprint arXiv:0911.0600}, 2009.

\bibitem[Pis14]{pisier2014quantum}
Gilles Pisier.
\newblock Quantum expanders and geometry of operator spaces.
\newblock {\em Journal of the European Mathematical Society}, 16(6):1183--1219, 2014.

\bibitem[Rec11]{recht2011simpler}
Benjamin Recht.
\newblock A simpler approach to matrix completion.
\newblock {\em Journal of Machine Learning Research}, 12(12), 2011.

\bibitem[Tro12]{tropp2012user}
Joel~A Tropp.
\newblock User-friendly tail bounds for sums of random matrices.
\newblock {\em Foundations of computational mathematics}, 12:389--434, 2012.

\bibitem[Tro15]{tropp2015introduction}
Joel~A. Tropp.
\newblock An introduction to matrix concentration inequalities.
\newblock {\em Foundations and Trends® in Machine Learning}, 8(1-2):1--230, 2015.

\bibitem[Ver18]{vershynin2018high}
Roman Vershynin.
\newblock {\em High-dimensional probability: An introduction with applications in data science}, volume~47.
\newblock Cambridge university press, 2018.

\bibitem[Wat05]{watrous2005notes}
John Watrous.
\newblock Notes on super-operator norms induced by schatten norms.
\newblock {\em Quantum Information \& Computation}, 5(1):58--68, 2005.

\bibitem[Wat18]{watrous2018theory}
John Watrous.
\newblock {\em The theory of quantum information}.
\newblock Cambridge university press, 2018.

\bibitem[Wil13]{wilde2013quantum}
Mark Wilde.
\newblock {\em Quantum information theory}.
\newblock Cambridge university press, 2013.

\end{thebibliography}


\end{document}